\newcommand{\G}{\mathbb{G}}
\newcommand{\Lag}{\mathcal{L}}
\newcommand{\I}{\mathbb{I}}
\newcommand{\Id}{\mathbb{I}}
\newcommand{\T}{\mathbb{T}}
\newcommand{\U}{\mathbb{U}}
\newcommand{\EX}{\mathbb{E}}
\newcommand{\Asym}{\text{Asym}}
\newcommand{\ve}{\vb{e}}
\newcommand{\mA}{\vb{A}}
\newcommand{\mB}{\vb{B}}
\newcommand{\mI}{\vb{I}}
\newcommand{\cI}{\mathcal{I}}
\newcommand{\mH}{\vb{H}}
\newcommand{\cH}{\mathcal{H}}
\newcommand{\vj}{\vb{j}}
\newcommand{\mJ}{\vb{J}}
\newcommand{\J}{\mathbb{J}}
\newcommand{\tj}{\tilde{\vj}}
\newcommand{\vn}{\vb{n}}
\newcommand{\vx}{\vb{x}}
\newcommand{\vy}{\vb{y}}
\newcommand\numthis{\stepcounter{equation}\tag{\theequation}}
\DeclarePairedDelimiter\ceil{\lceil}{\rceil}
\DeclarePairedDelimiter\floor{\lfloor}{\rfloor}
\newcommand{\bmm}[1]{\mathbb{#1}}
\newtheorem{theorem}{Theorem}[section]
\newtheorem{lemma}[theorem]{Lemma}
\newcommand{\REVISION}[1]{{\color{red} #1}}
\begin{document}

\preprint{APS/123-QED}

\title{Maximum Shannon Capacity of Photonic Structures}

\author{Alessio~Amaolo\(^\ddagger\)}
\email{Contact author: alessioamaolo@princeton.edu}
\affiliation{Department of Chemistry, Princeton University, Princeton, New Jersey 08544, USA}
\thanks{These authors contributed equally to this work.}
\author{Pengning~Chao\(^\ddagger\)}
\email{Contact author: pchao827@mit.edu}
\affiliation{Department of Mathematics, Massachusetts Institute of Technology, Cambridge, Massachusetts 02139, USA}
\thanks{These authors contributed equally to this work.}
\author{Benjamin~Strekha}
\affiliation{Department of Electrical and Computer Engineering, Princeton University, Princeton, New Jersey 08544, USA}
\author{Stefan~Clarke}
\affiliation{Department of Operations Research and Financial Engineering, Princeton University, Princeton, New Jersey 08544, USA}
\author{Jewel~Mohajan}
\affiliation{Department of Electrical and Computer Engineering, Princeton University, Princeton, New Jersey 08544, USA}
\author{Sean~Molesky}
\affiliation{Department of Engineering Physics, Polytechnique Montréal, Montréal, Québec H3T 1J4, Canada}
\author{Alejandro~W.~Rodriguez}
\affiliation{Department of Electrical and Computer Engineering, Princeton University, Princeton, New Jersey 08544, USA}
\date{\today}

\begin{abstract}
Information transfer through electromagnetic waves is an important problem that touches a variety of technologically relevant applications, including computing, telecommunications, and power management. 
Prior attempts to establish limits on optical information transfer have focused exclusively on waves propagating through vacuum or a known photonic structure (prescribed wave sources and receivers).
In this article, we describe a mathematical theory that addresses fundamental questions concerning optimal information transfer in photonic devices. 
Combining information theory, wave scattering, and optimization theory, we formulate bounds on the maximum Shannon capacity that may be achieved by structuring senders, receivers, and their environment.
This approach provides a means to understand how material selection, device size, and general geometrical features impact power allocation, communication channels, and bit-rate in photonics.  
Allowing for arbitrary structuring leads to a non-convex problem that is significantly more difficult than its fixed structure counterpart, which is convex and satisfies a known ``water-filling'' solution.
We derive a geometry-agnostic convex relaxation of the problem that elucidates fundamental physics and scaling behavior of Shannon capacity with respect to device parameters and the importance of structuring for enhancing capacity.
We also show that in regimes where communication is dominated by power insertion requirements, bounding Shannon capacity maps to a biconvex optimization problem in the basis of singular vectors of the Green's function. 
This problem admits analytic solutions that give physically intuitive interpretations of channel and power allocation and reveals how Shannon capacity varies with signal-to-noise ratio.
Proof of concept numerical examples show that bounds are within an order of magnitude of achievable device performance and successfully predict the scaling of performance with channel noise.
The presented methodologies have implications for the optimization of antennas, integrated photonic devices, metasurface kernels, MIMO space-division multiplexers, and waveguides for maximizing communication efficiency and bit-rates.
\end{abstract}

\maketitle

\section{Introduction}

From the 19th century electrical telegraph to modern 5G telecommunications, electromagnetics has played an increasingly important role in human communication. 
Paralleling this rapid improvement in physical technologies, our understanding of the fundamental meaning of information has also steadily improved.
Building upon earlier work by Nyquist~\cite{Nyquist_1924} and Hartley~\cite{Hartley_1928}, Shannon demonstrated in two landmark papers~\cite{shannon_mathematical_1948,shannon_communication_1949} that not only is it possible to transmit information at a finite rate with arbitrarily small error through noisy channels, but that the maximum transmission rate---now known as the Shannon capacity---can also be explicitly calculated as a function of the channel bandwidth and signal-to-noise ratio (SNR).  
Shannon's original derivation focused on scalar, time-varying signals subject to additive white Gaussian noise, but the principles he elucidated are universal and can be applied to more general physical settings, including electromagnetic wave phenomena~\cite{franceschetti_wave_2017}. 
Information capacity results have been obtained for a variety of systems including fiber optics~\cite{tang_shannon_2001,essiambre_capacity_2010,shtaif_challenges_2022} and wireless communications networks~\cite{franceschetti_capacity_2009,lee_capacity_2012,goldsmith_shannon_2011}; in particular, an increasingly relevant area of research are multiple-input multiple-output (MIMO) systems that make use of spatial multiplexing, wherein the signal fields have spatial as well as temporal degrees of freedom~\cite{gesbert_theory_2003}. 
Communication in the presence of a scatterer has also been analyzed by choosing an appropriate optical characteristic proxy, e.g., the number of waves escaping a region~\cite{miller_kuang_miller_tunneling_2023} or the sum of field amplitudes at the receiver~\cite{Miller_2013}.
These figures of merit are intimately related to other quantities of power transfer such as thermal radiation~\cite{molesky_mathbbt-operator_2019} and near-field radiative heat transfer~\cite{molesky_fundamental_2020,venkataram_fundamental_2020-1} and establish intuitive connections between fields at the receiver and communication, but do not directly maximize Shannon capacity. 

These prior results generally work well if the wave propagation medium is fixed, be it antenna design in free-space \cite{ehrenborg_bounds_MIMO_2017,ehrenborg_bounds_MIMO_2020,ehrenborg_capacity_bounds_2021} or guided modes in a fiber~\cite{tang_shannon_2001,essiambre_capacity_2010,shtaif_challenges_2022}. 
Yet, they do not capture the range of physics and possibilities of control and optical response achievable via (nano)photonic structuring. 
Antennas at either the sender or receiver may be designed to enhance gain and directivity (therefore bit-rate), and similarly their environment may be structured to achieve greater communication and field enhancements (e.g., parabolic reflectors, metasurfaces, waveguides, fibers, etc.).
In fact, there is emerging consensus on the increasing importance of complex structured environments for electromagnetic communication in a wide range of contexts, including silicon photonics~\cite{shekhar_roadmapping_2024}, 
on-chip optical interconnects~\cite{caulfield_supercomputing_2010,bashir_survey_2019}, and integrated image processing~\cite{sitzmann_endtoend_2018,lin_endtoend_2021,lin_endtoend_2022,hazineh_DFlat_2022}. Naturally, for each of these systems, the question of \textit{how much} material structuring can improve their information transfer capabilities is key to quantifying future potential. 
Increasing application of computational photonic design~\cite{molesky_inverse_2018} utilizing techniques from mathematical optimization to maximize field objectives over many structural degrees of freedom has begun to address performance gaps in many settings. In particular, such design freedom and increased ability to concentrate power into selective wavelengths (e.g., resonant modes) or to manipulate fields in counter-intuitive ways has a wide array of applications and implications for photonic templates of communication: metasurface kernels~\cite{molesky_t_2022}, integrated resonators~\cite{ahn_microresonators_2022}, and channel engineering~\cite{Miller_2013}. 

In this work, we leverage recent developments in photonic optimization to present a unified framework for investigating the extent to which structured photonic devices may enhance information transfer. First, we show how the structure-dependent, non-convex problem of maximizing information transfer can be relaxed into a shape-independent, convex problem to arrive at a bound on Shannon capacity subject to drive-power and physical wave constraints derived from Maxwell's equations. 
Second, we show that in regimes where power requirements are dominated by insertion impedance (i.e., internal resistance or contact impedance in the driving currents), the maximization of the Shannon capacity can be significantly simplified in the basis of singular vectors of the Green's function, written in terms of a single signal-to-noise free parameter, and further relaxed to a biconvex problem that admits analytic solutions.  
We present several illustrative examples where we compare these bounds to inverse designs and show that not only do the bounds predict trends but are within an order of magnitude of achievable performance for a wide range of signal-to-noise ratios. 

Bounds on the Shannon capacity provide performance targets for communication and elucidate how information capacity scales with key device parameters (e.g., material choices, device sizes, or bandwidth). 
The fundamental limitations encoded in Maxwell's equations (spectral sum rules, finite speed of light, space-bandwidth products, etc.) dictate a finite limit on the ability to concentrate power over selective wavelengths (e.g., via resonances) or propagate fields over a given distance~\cite{chao_physical_2022, barnett_sum_1996, scheel_sum_2008, shim_fundamental_2019,zhang_all_2023}. The proposed method paves the way for understanding how such trade-offs affect electromagnetic information transfer.

\section{Theory} \label{sec:theory}

\begin{figure}
    \centering
    \includegraphics[width=0.9\linewidth]{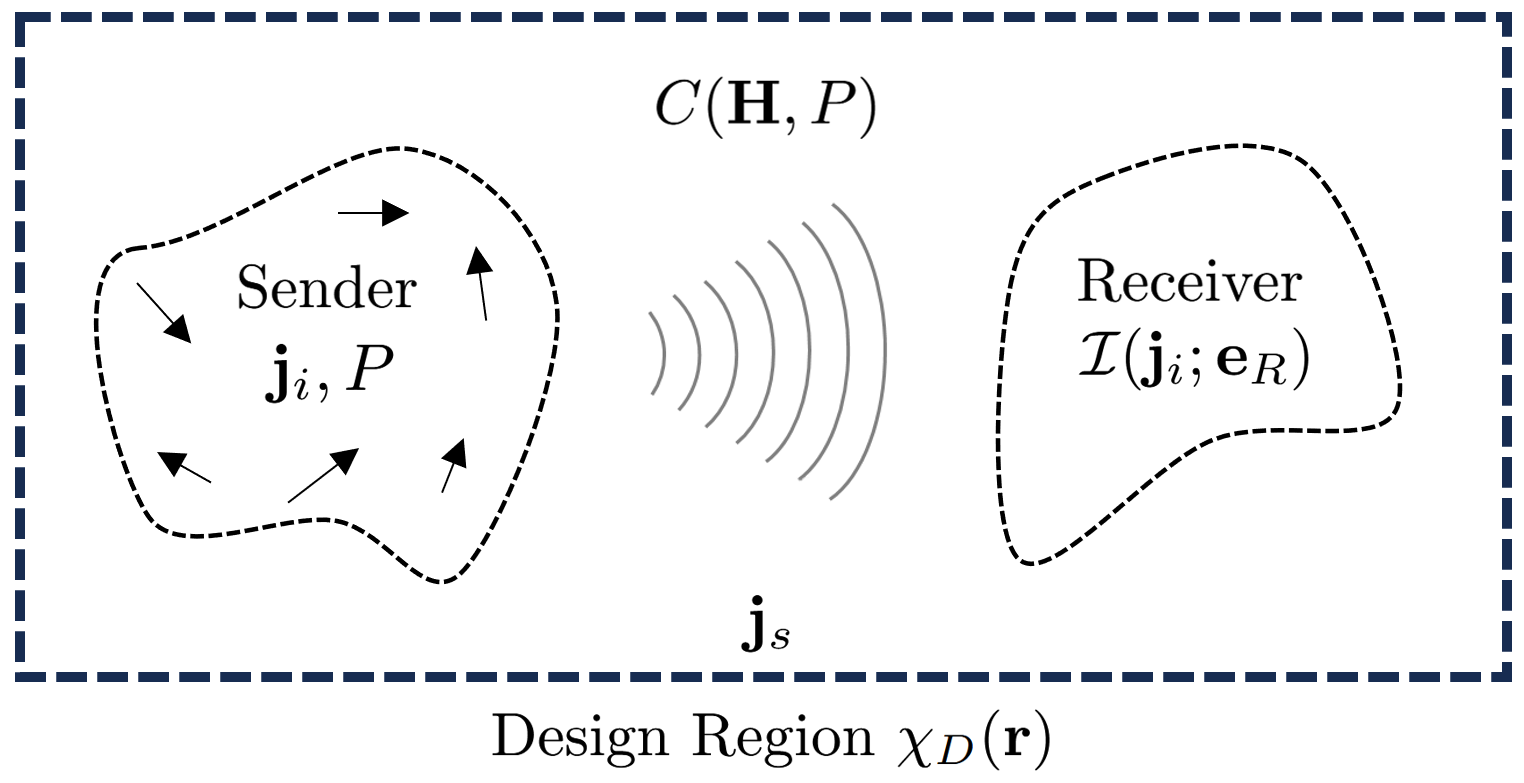}
    \caption{Schematic of photonic communication setup. Sender and receiver devices (e.g., ``antennas", waveguides, free space) and their surrounding medium (e.g., metasurfaces, on-chip multiplexers, free space) may be simultaneously designed to maximize information transfer. 
    The Shannon capacity $C$ is the maximum achievable rate of error-free information transfer between input currents and receiving fields. 
    Free currents in the sender \(\vj_i\) and bound currents in the design \(\vj_s\) are optimized to achieve maximum $C$ respecting Maxwell's equations under a prescribed (limited) power budget.}
    \label{fig:schematic}
\end{figure}

In this section, we begin with a brief review of several defining relations of information capacity that pertain to wave communication. We show how the standard formula for MIMO communication can be derived from the fundamental definition of the Shannon capacity, subject to a constraint on the magnitude of the input signal. 
This constraint manifests electromagnetically as a constraint on input current amplitudes; we show how we can also derive a constraint with respect to drive power, connecting to prior work on vacuum information transfer in the context of antenna design~\cite{ehrenborg_bounds_MIMO_2017,ehrenborg_bounds_MIMO_2020,ehrenborg_capacity_bounds_2021}. 
By incorporating the freedom to structure senders, receivers, and their surrounding environments, we formulate a non-convex structural optimization problem that bounds the maximum Shannon capacity over any photonic structure. 
We show how the problem can be relaxed to a convex optimization problem or, alternatively, to a biconvex problem on the channel capacities as given by the electromagnetic Green's function.

We consider a MIMO system with input signal $\vb{x} \in \mathbb{C}^n$ and output signal $\vb{y} \in \mathbb{C}^m$ related by 
\begin{equation}
    \vy = \mH \vx + \vn
    \label{eq:MIMOchannel}
\end{equation}
where the $\mH \in \mathbb{C}^{m \times n}$ is the channel matrix and the channel has additive zero-mean complex Gaussian noise with covariance $\EX[\vn \vn^\dagger] = N\mI$, where $\EX[]$ denotes expected value. 
The channel matrix and noise prescription relate ``input" signals \(\vb x\) to ``output" signals \(\vb y\), which are viewed as random variables given the randomness of noise and the probabilistic argument used by Shannon to derive the Shannon capacity~\cite{shannon_mathematical_1948,shannon_communication_1949,mackay_inference_2019}. Communication is possible when $\vb x$ and $\vb y$ are \textit{dependent}: an observation of $\vb y$ changes the \textit{a posteriori} probability for any given $\vb x$. Denote the joint probability distribution of $\vb x$ and $\vb y$ as $\gamma(\vb x, \vb y)$; the marginal and conditional probability distributions are given by 
\begin{subequations}
    \begin{align}
        \gamma_{\vb x}(\vb x) &= \int \gamma(\vb x, \vb y) \dd \vb y, \\
        \gamma(\vb x|\vb y) &= \gamma(\vb x, \vb y) / \gamma_{\vb y}(\vb y), 
    \end{align}
\end{subequations}
and symmetrically in \(\vy\).
The differential and conditional differential entropy of \(\vx\) can now be written as
\begin{equation} \label{eq:entropy} \begin{aligned}
    \cH(\vx) &= \int \gamma_{\vb x}(\vx) \log_2(1/\gamma_{\vx}(\vx)) \dd \vx,~\text{and} \\
    \cH(\vx|\vy) &= \int \gamma_{\vb y}(\vy) \int \gamma(\vx|\vy) \log_2(1/\gamma(\vx|\vy)) \dd \vx \dd \vy.
\end{aligned} \end{equation}
$\cH(\vx)$ is a continuous analog of the discrete information entropy which encodes the level of uncertainty in \(\vb x\) 
and is a measure of the information gained upon observation of $\vb x$; $\cH(\vx|\vy)$ in turn is the expected entropy of \(\vx\) conditioned on observation of \(\vy\)~\cite{mackay_inference_2019}. 
Their difference is the \textit{mutual information} between \(\vx\) and \(\vy\), denoted \(\cI(\vx ; \vy)\): 
\begin{equation}
    \cI(\vx;\vy) = \cH(\vx) - \cH(\vx|\vy) = \cH(\vy) - \cH(\vy|\vx).
    \label{eq:mutual_info}
\end{equation}
Intuitively, $\cI(\vx;\vy)$ is the expected reduction in uncertainty (and hence gain in information) of $\vb x$ upon observation of $\vy$. The second equality follows from Eq.~\eqref{eq:entropy} and indicates that $\cI(\vx;\vy)$ is symmetric between $\vx$ and $\vy$; note that for our MIMO channel (\ref{eq:MIMOchannel}) the second term is simply \(\cH(\vy | \vx) = \cH(\vn)\).
Shannon's noisy coding theorem states that a tight upper limit on the error-free information transfer rate across the channel achievable via encoding is the Shannon capacity $C = \sup_{\gamma_{\vx}} \, \cI(\vx;\vy)$~\cite{shannon_mathematical_1948,cover_information_2006,mackay_inference_2019}. 
Over a channel with continuous signals and finite additive noise as in~\eqref{eq:MIMOchannel}, $C$ diverges without further constraints on the magnitude of \(\vx\): 
intuitively, a larger signal magnitude will make the corruption of additive noise $\vb{n}$ proportionally smaller. Without further details of the underlying physics of the channel, this is addressed by enforcing a cap on the average magnitude of $\vx$, which is equivalent to a constraint on the covariance $\vb{Q}$ of $\gamma_{\vx}$:
\begin{equation}
    \EX_{\gamma_{\vx}}[\vx^\dagger \vx] = \Tr(\EX_{\gamma_{\vx}}[\vx \vx^\dagger]) \equiv \Tr \vb Q \leq P.
    \label{eq:tr_constraint}
\end{equation}

The Shannon capacity of our MIMO system is thus
\begin{equation}
    C(\mH, P) = \sup_{\gamma_{\vx}, \Tr(\vb Q)\leq P} \quad \cI(\vx;\vy)
    \label{eq:coding_thm}
\end{equation}
where the supremum is taken over all possible probability distributions $\gamma_{\vx}(\vx)$ for $\vx$.
Given (\ref{eq:MIMOchannel}), one can show that among all $\gamma_{\vx}$ with a particular covariance $\vb{Q}$, $\cI(\vx;\vy)$ is maximized by the corresponding Gaussian distribution with covariance \(\vb Q \)~\cite{telatar_capacity_Gaussian_channels_1999}. 
Thus when solving (\ref{eq:coding_thm}) we can consider only Gaussian distributions, leading to the well-known expression for the Shannon capacity per unit bandwidth, \(C\), for a MIMO  system~\cite{gesbert_theory_2003, goldsmith_capacity_MIMO_2003}
\begin{subequations} \begin{align}
    C(\mH,P) &= \max_{\vb{Q} \succ \vb{0}, \Tr(\vb{Q})\leq P} \quad \cH(\vy) - \cH(\vn) \\
    &= \max_{\vb{Q} \succ \vb{0}, \Tr(\vb{Q})\leq P} \quad\log_2 \det(\mH\vb{Q}\mH^\dagger + N\mI) \nonumber \\ &\hspace{1.2in}- \log_2 \det(N\mI) \\
    &= \max_{\vb{Q} \succ \vb{0}, \Tr(\vb{Q})\leq P} \quad \log_2 \det(\mI + \frac{1}{N} \mH\vb{Q}\mH^\dagger). 
    \label{eq:capacity_MIMO}
\end{align} \end{subequations}
This is a convex optimization problem and has an analytic solution that is often referred to as ``water-filling."
For further mathematical details the reader is referred to~\cite{telatar_capacity_Gaussian_channels_1999} and Appendix~\ref{asec:waterfilling}. 

We will now consider Shannon capacity in the physical context of information transfer via electromagnetic fields and describe how to incorporate the possibility of photonic structuring into bounds on the information transfer rate. 
The schematic setup is illustrated in Fig.~\ref{fig:schematic}, where the input $\vx$ is a free current distribution $\vj_i$ within a sender region $S$, and the output $\vy$ is the electric field $\ve_R$ generated in a receiver region $R$. 
Working in the frequency domain with dimensionless units of $\mu_{0}=\epsilon_{0}=1$ and vacuum impedance \(Z = \sqrt{\mu_0 / \epsilon_0} = 1\), the electric field \(\vb e\) in a region obeys Maxwell's wave equation \(\left( \nabla \times \nabla \times - \epsilon(\vb r) \omega^2 \right) \ve(\vb r; \omega) = i\omega \vj_i(\vb r)\), where \(\omega\) is the angular frequency of the source current \(\vj_i\), \(k= \omega\), and \(\epsilon(\vb r)\) is the permittivity distribution in space. 
The total Green's function \(\G_t(\vb{r},\vb{r'}, \epsilon(\cdot); \omega)\) maps input currents \(\vb j_i\) to electric fields $\ve$ via $\ve(\vb r) = \frac{i}{\omega} \int \G_t(\vb r,\vb r') \vj_i(\vb r') \dd \vb r'$ and satisfies \( \left( \nabla \times \nabla \times - \epsilon(\vb r) \omega^2 \right)\G_t(\vb r, \vb r',\epsilon(\cdot);\omega) = \omega^2 \vb{I} \delta(\vb{r} - \vb{r'}) \), where $\vb I$ is the unit dyad. 
We also denote \(\chi_D\) a constant susceptibility factor associated with a photonic device and related to the permittivity profile via \(\epsilon(\vb r) - \I = P_\epsilon(\vb r) \chi_D\) where \(P_\epsilon(\vb r)\) is a projection operator containing the spatial dependence of \(\epsilon(\vb r)\).
In vacuum, there is no photonic structuring so the vacuum Green's function is given by \(\G_0 \equiv \G_t(\vb r, \vb r', \Id; \omega) \). We denote \(\G_{t,RS}\) as the sub-block of the Green's function that maps currents in the source region to fields in the receiver region, so the channel matrix \(\vb H = \frac{i}{\omega} \G_{t,RS}\)~\cite{zhu_electromagnetic_info_theory_2024}. 
As shown in Fig.~\ref{fig:schematic}, we can modify $\G_t$ and $\vb{H}$ through the engineering of a photonic structure $\epsilon(\vb r)$ within the design region $D$, which may also incorporate either the $S$ or $R$ regions; the central question we seek to address is the extent to which photonic engineering may boost the Shannon capacity $C$. 

Some constraints on the input currents $\vj_i$ are needed for $C$ to be finite; instead of the magnitude constraint prescribed in the more abstract formulation of~\eqref{eq:coding_thm}, we instead consider a budget $P$ on the total drive power needed to maintain $\vj_i$: 
\begin{equation} \label{eq:drive_power_constraint}
    \EX\left[-\frac{1}{2}\Re{\frac{i}{\omega} \vj_i^\dagger \G_{t,SS} \vj_i} \right] + \EX\left[ \alpha |\vj_i|^2 \right] \leq P.
\end{equation} 
The total drive power in~\eqref{eq:drive_power_constraint} is comprised of two parts: the first term is the power extracted from $\vj_i$ by the electric field, which includes radiated power and material absorption: optical loss processes that are modeled explicitly via Maxwell's equations. 
The second term is an ``insertion cost'' for the (abstract) drive producing the input currents, which we assume to be proportional to the squared norm of $\vj_i$ with proportionality factor $\alpha$.
Depending on the specific drive mechanism, this term may represent, for example, the internal resistance of a power source or contact impedance leading into an antenna~\cite{ehrenborg_capacity_bounds_2021}. 

Adapting the Shannon capacity expression \eqref{eq:coding_thm} to the specifics of our electromagnetic channel along with the drive power budget \eqref{eq:drive_power_constraint}, keeping in mind that we are also designing the propagation environment (including sender and receiver) via $\epsilon(\vb r)$, we arrive at the following \textit{structural} optimization problem
\begin{subequations}
\begin{align}
    \max_{\epsilon(\vb{r})} \quad &C(\epsilon, P) = \max_{\gamma_{\vj_i}} \, \cI(\vj_i;\ve_R;\epsilon) \\
    \text{s.t.} \quad & \EX_{\gamma_{\vj_i}} \left[-\frac{1}{2}\Re{\frac{i}{\omega} \vj_i^\dagger \G_{t,SS}(\epsilon) \vj_i}  + \alpha|\vj_i|^2\right] \leq P.
\end{align}
\label{eq:Shannon_TO}
\end{subequations}
Given that \eqref{eq:drive_power_constraint} can be written as a linear trace constraint on the covariance of $\gamma_{\vj_i}$, we only need to consider Gaussian distributions for the inner optimization, allowing us to rewrite \eqref{eq:Shannon_TO} more explicitly as 
\begin{subequations} \label{eq:Shannon_GaussianTO}
    \begin{align}
        \max_{\epsilon(\vb{r}), \J_i} \quad & \log_2 \det(\Id + \dfrac{P}{N\omega^2} \G_{t, RS}(\epsilon) \J_i \G_{t, RS}^\dagger (\epsilon)) \label{eq:shannon_objective} \\
        \text{s.t.} \quad & \Tr(\left( \alpha \I_{S} + \dfrac{1}{2\omega} \Asym\G_{t, SS}(\epsilon)\right) \J_i) \leq 1 \label{eq:shannon_objective_power} \\
        & \J_i \succeq \vb{0} \\
        & (\curl\curl - \epsilon(\vb{r})\omega^2) \G_t(\vb{r},\vb{r}';\epsilon) = \omega^2 \mI \delta(\vb{r}-\vb{r}')
    \end{align}
\end{subequations}
where \(\J_i\) is the covariance operator \(\EX[\vj_i \vj_i^\dagger]/P\) and \(\Asym \vb{A} \equiv \frac{1}{2i} (\vb{A} - \vb{A}^\dagger)\).
This scaling by \(P\) allows the expression of the objective in terms of a familiar signal-to-noise ratio \(P/N\). The two free parameters in Eq.~\eqref{eq:Shannon_GaussianTO} are \(P/N\) and \(\alpha\). 
We note that \(\alpha\) formally has units of resistivity: in this article, we take these units to be \(Z \lambda\). Similarly, \(P/N\) has units of \(\epsilon_0 c \lambda^2\) in 3D and \(\epsilon_0 c \lambda\) in 2D.
Note that the optimization degrees of freedom \(\epsilon(\vb r)\) may be restricted to any sub-region of the domain (e.g., one may consider a prescribed set of antennas and instead seek only to optimize the region between them).
This problem is non-convex owing to the nonlinear dependence of \(\G\) on \(\epsilon(\vb r)\). 
In particular, the water-filling solution does not apply: maximizing Shannon capacity involves co-optimization of both current covariance \textit{and} the photonic structure. 
Lastly, we note that while Eq.~\eqref{eq:Shannon_GaussianTO} contains functions of continuous operators, its corresponding discretized version converges with increasing resolution (see Appendix~\ref{asec:discretization}).

In the following sections, we will derive bounds on \eqref{eq:Shannon_GaussianTO} via two different methods and discuss our results in the context of prior work. In the first method of Sec.~\ref{sec:convex_relax}, we will reinterpret~\eqref{eq:Shannon_TO} from a maximization over $\epsilon(\vb r)$ and $\gamma_{\vj_i}$ to a maximization over joint probability distributions of $\vj_i$ and the induced polarization current $\vj_s = \left( \epsilon(\vb r) - \I \right) \G_t(\epsilon(\vb r)) \vj_i$. 
By relaxing physical consistency requirements on this joint probability distribution, we obtain a convex problem that can be solved with methods in semidefinite programming (SDP) or via a convex duality approach described in Appendix~\ref{asec:SDP_numerics}. 
Second, in Sec.~\ref{sec:biconvex_relax}, we consider a regime in which the drive power~\eqref{eq:drive_power_constraint} is dominated by insertion losses in the sender, and utilize a convenient basis transformation to express the Shannon capacity in terms of the singular values of \(\G_{t,RS}\), yielding an intuitive decomposition into orthogonal channels. 
By relaxing Maxwell's equations to encode wave behavior directly into structurally agnostic bounds on these singular values, we derive a \textit{biconvex} problem that for a limited number of constraints admits analytic solutions and physically intuitive interpretations of channel and power allocation. 
Both relaxations can in principle be applied in the appropriate regimes of validity, but theoretically understanding their differences is challenging: while both encode the physics of Maxwell's equations, the first does so via power-conservation constraints on polarization currents while the second through the dependence of the channel matrix on its singular values. 
Surprisingly, however, we find that the predicted bounds are close in regimes where they can be compared. 

\subsection{Convex Relaxation via Expected Power Conservation over Joint Distributions} 
\phantomsection \label{sec:convex_relax}

For notational convenience we define the combined source-induced current $\tj = \begin{pmatrix}\vb{j}_i \\ \vb{j}_s \end{pmatrix}$ and a function that maps input currents to induced polarization currents \(S(\vj_i; \epsilon) = (\epsilon-\Id) \G_t(\epsilon) \vj_i\). 
The set of physically feasible joint distributions is then given by
\begin{equation}
    \mathcal{P}_{physical} = \left\{\gamma_{\tj} \,\bigg|\, \exists \epsilon \,  \forall \vj_i \quad \gamma_{\tj}(\vj_s|\vj_i) = \delta(\vj_s - S(\vj_i; \epsilon)) \right\}.
    \label{eq:physical_set}
\end{equation}
Sampling any joint distribution $\gamma_{\tj} \in \mathcal{P}_{physical}$ gives $\tj$ such that $\vj_s$ is the induced polarization current in some $\epsilon(\vb{r})$ given $\vj_i$; furthermore for a fixed $\gamma_{\tj}$ this underlying $\epsilon(\vb{r})$ is consistent across all $\vj_i$.  
Given $\vj_s$, the output field $\ve_R$ and the extracted drive power $P$ can be expressed solely using the vacuum Green's function
\begin{gather}
    \ve_R = \frac{i}{\omega}\G_{RS}(\epsilon) \vj_i = \frac{i}{\omega}(\G_{0,RS}\vj_i + \G_{0,RD}\vj_s), \\
P = -\frac{1}{2}\Re{\frac{i}{\omega}\vj_i^\dagger (\G_{0,SS}\vj_i + \G_{0,SD} \vj_s)} + \alpha |\vj_i|^2.
\end{gather} 
Observe that the joint distribution optimization
\begin{subequations}     \label{eq:Shannon_joint} %
    \begin{align}
        &\max_{\gamma_{\tj} \in \mathcal{P}_{physical}} \, \cI(\tj;\ve_R)  \\
        &\text{s.t.} \, \EX_{\gamma_{\tj}}\left[-\frac{1}{2}\Re{\frac{i}{\omega}\vj_i^\dagger (\G_{0,SS}\vj_i + \G_{0,SD} \vj_s) } + \alpha |\vj_i|^2 \right] \leq P
    \end{align}
\end{subequations}
is exactly equivalent to \eqref{eq:Shannon_TO}: here, the $(\epsilon(\vb r), \gamma_{\vj_i})$ pairs have a one-to-one correspondence with $\gamma_{\tj} \in \mathcal{P}_{physical}$, both producing the same output distribution $\gamma_{\ve_R}$ and hence, the same $\mathcal{H}(\ve_R)$ and $\mathcal{I}$. 

So far, \eqref{eq:Shannon_joint} is just a rewriting of \eqref{eq:Shannon_TO} with the complexity of the problem buried in the description of the set $\mathcal{P}_{physical}$; we now relax~\eqref{eq:Shannon_joint} by considering a superset of $\mathcal{P}_{physical}$ that has a simpler mathematical structure. 
Instead of insisting that there exists some specific $\epsilon(\vb{r})$ for a member joint distribution as in (\ref{eq:physical_set}), we simply require that \textit{on average}, $\vj_i$ and $\vj_s$ satisfy structure agnostic local energy conservation constraints (see Appendix~\ref{asec:QCQP}):
\begin{equation} \label{eq:poynting}
    \EX_{\gamma_{\tj}}[\vj_i^\dagger \G_{0,DS}^\dagger \Id_k \vj_s - \vj_s^\dagger \U \Id_k \vj_s] = 0
\end{equation}
where \(\U \equiv \left( \chi_D^{-\dagger}\I - \G_{0,DD}^\dagger \right)\), $\Id_k$ is the indicator function over some design sub-region $D_k \subset D$ and \(\chi_D\) is the \textit{constant} susceptibility of the design region that does not depend on geometry. 
This relaxed requirement gives us the relaxed distribution set
\begin{equation}
    \mathcal{P}_{relaxed} = \left\{\gamma_{\tj} \,\bigg|\, \EX_{\gamma_{\tj}}[\vj_i^\dagger \G_{0,DS}^\dagger \Id_k \vj_s - \vj_s^\dagger \U \Id_k \vj_s] = 0\right\}
    \label{eq:relaxed_set}
\end{equation}
with an associated Shannon capacity problem
\begin{equation}
    \begin{aligned}
        &\max_{\gamma_{\tj} \in \mathcal{P}_{relaxed}} \, \cI(\tj;\ve_R) \\
        &\text{s.t.} \, \EX_{\gamma_{\tj}}\left[-\frac{1}{2}\Re{\frac{i}{\omega}\vj_i^\dagger (\G_{0,SS}\vj_i + \G_{0,SD} \vj_s)} + \alpha |\vj_i|^2\right] \leq P.
    \end{aligned}
    \label{eq:Shannon_relaxed}
\end{equation}
Given that $\mathcal{P}_{physical} \subset \mathcal{P}_{relaxed}$, the optimum of \eqref{eq:Shannon_relaxed} bounds \eqref{eq:Shannon_joint} and hence \eqref{eq:Shannon_TO} from above. Furthermore, the restrictions of \eqref{eq:relaxed_set} are just linear constraints on the covariance of $\gamma_{\tj}$; we can again restrict attention to just Gaussian distributions to solve \eqref{eq:Shannon_relaxed}, yielding
\begin{subequations}
    \begin{align}
        \max_{\J} \quad &\log_2 \det \left(\Id + \frac{P}{N\omega^2}\mqty[\G_{0,RS} & \G_{0,RD}] \J \mqty[\G_{0,RS}^\dagger \\ \G_{0,RD}^\dagger] \right) \label{eq:bound_obj}\\
        \text{s.t.} \quad &\Tr{\mqty[\alpha \I_S + \frac{1}{2\omega} \Asym(\G_{0,SS}) & \frac{1}{2\omega} \frac{1}{2i}\G_{0,SD} \\ \frac{1}{2\omega} \frac{-1}{2i}\G_{0,SD}^\dagger & \vb{0}] \J} \leq 1 \label{eq:tr_power_constraint}\\
        & \Tr{P \mqty[\vb{0} & - \G_{0,DS}^\dagger \Id_k \\ 
        \vb{0} & \U \Id_k] \J} = 0, \, \forall k  \label{eq:tr_re_sca_constraint}\\
        &\J \succeq \vb{0} 
    \end{align}
    \label{eq:direct_convex_bound}
\end{subequations}
This \textit{convex} optimization problem over the joint covariance \(\J \equiv \EX[\tj \tj^\dagger]/P\) is guaranteed to give a finite bound on the Shannon capacity given finite $\alpha, \Im\chi_D > 0$ (Appendix~\ref{asec:SDP_PD}) and can be solved using methods in semidefinite programming. 
Although (\ref{eq:direct_convex_bound}) is not strictly an SDP due to the log det objective, $-\log(\det(\cdot))$ is the standard logarithmic barrier function for SDP interior point methods and certain solvers support this objective natively (see for example SDPT3~\cite{Toh_SDPT3_2012}). In our numerical testing, we have found that using existing general convex optimization solvers for \eqref{eq:direct_convex_bound} is computationally expensive for even wavelength-scale domains, and instead we solve \eqref{eq:direct_convex_bound} using a ``dual water-filling'' approach that is a generalization of the method used in~\cite{ehrenborg_bounds_MIMO_2020} (see Appendix~\ref{asec:SDP_numerics}).  

\subsection{Insertion-Dominated Power Requirements and Singular-Value Decompositions} \label{sec:biconvex_relax}
Although the formulation above rigorously includes drive-power constraints on the Shannon capacity to all orders in scattering by the photonic structure, it is both practically and pedagogically useful to consider a relaxation of Eq.~\eqref{eq:Shannon_GaussianTO} that assumes drive-power considerations are dominated by insertion losses. 
This yields a solvable biconvex problem that elucidates fundamental principles of power transfer and channel allocation and may prove accurate in cases where structuring (local density of states enhancement) does not significantly influence extracted power, e.g., a laser incident on a scatterer or communication in an electrical wire. 
Mathematically, the contributions on the power constraint given by \(\G_{t,SS}\) are negligible compared to the insertion losses given by \(\sim \alpha \abs{\vj_i}^2\), which results in a current amplitude constraint of the form \(\abs{\vj_i}^2 \leq P / \alpha\).

We can simplify Eq.~\eqref{eq:Shannon_GaussianTO} by decomposing the Green's function via the singular value decomposition \(\G_{t,RS} = \sum_j \sigma_j \boldsymbol{\mu}_j \boldsymbol{\nu}_j^\dagger \equiv \boldsymbol{\mu} \Sigma \boldsymbol{\nu}^\dagger \).
We then \textit{define} each \(\boldsymbol{\mu}_j, \boldsymbol{\nu}_j\) as the available ``channels" for communication and \(\sigma_j\) as its corresponding channel strength~\cite{bucci_representation_1998,piestun_electromagnetic_2000,poon_degrees_2005,kuang_communication_channel_bounds_2022}. 
The compactness of \(\G\) ensures that inclusion of a finite number of channels 
\(n_c\) approximates the operator to high precision (i.e., the singular values decay)~\cite{hackbusch_hierarchical_matrices_2015}. 
This operator maps unit amplitude currents \(\boldsymbol{\nu}_j\) in \(S\) to fields \(\mathbf{e}_{R,j} = \frac{i}{\omega} \sigma_j \boldsymbol{\mu}_j\) in \(R\).
Diagonalizing \(\G_{t,RS}^\dagger \G_{t,RS} = \boldsymbol{\nu} \Sigma^2 \boldsymbol{\nu}^\dagger\), where \(\Sigma^2\) is a diagonal matrix of squared singular values, the Shannon capacity objective of Eq.~\eqref{eq:Shannon_GaussianTO} may then be written as~\cite{telatar_capacity_Gaussian_channels_1999} 
\begin{equation} \label{eq:shannon_objective_diag}
    \log_2 \det \left( \I + \dfrac{P}{N\omega^2} \Sigma \boldsymbol{\nu}^{\dagger} \J \boldsymbol{\nu} \Sigma \right).
\end{equation}
By Hadamard's inequality~\cite{różański_hadamard_2017}, this is maximized when \(\boldsymbol{\nu}^{\dagger} \J \boldsymbol{\nu}\) is diagonal, so we bound Eq.~\eqref{eq:shannon_objective_diag} by taking the degrees of freedom \(J_i\) to be the diagonal entries of \(\boldsymbol{\nu}^{\dagger} \J \boldsymbol{\nu}\).
The structural optimization problem~\eqref{eq:Shannon_GaussianTO} then takes the simpler form
\begin{equation} \label{eq:shannon_full_current}
\begin{aligned}
    \max_{\vb J, \epsilon(\vb r)} \quad & \sum_{i=1}^{n_c} \log_2 \left( 1 + \gamma J_i \abs{\sigma_i}^2 \right) \\
    \textrm{s.t.} \quad & \vb J \geq 0 \\ 
    & \vb 1^T \vb J \leq 1  \\ 
    & \left( \nabla \times \nabla \times - \epsilon(\vb r) \omega^2 \right)\G_t(\vb r, \vb r',\epsilon(\cdot);\omega) = \omega^2 \vb{I} \delta(\vb{r} - \vb{r'}) \\
    & \G_{t,RS}(\omega, \epsilon(\vb r)) = \sum_i^{n_c} \sigma_j \boldsymbol{\mu}_j \boldsymbol{\nu}_j^\dagger,
\end{aligned}
\end{equation}
where neglecting the work done by \(\vj_i\) allows writing the problem in terms of a \textit{single} free parameter \(\gamma = \frac{P}{\alpha \omega^2 N}\), and \(\vb J\) is the vector of current amplitudes \(J_i\). 
The utility of this formulation is now apparent: for problems where the impact of ``back-action" from the photonic environment on power requirements is negligible compared to losses in the drive currents, this relaxation provides a means to investigate Shannon capacity in terms of a ``scale-invariant" SNR parameter \(\gamma\).

For a known structure and therefore Green's function,~\eqref{eq:shannon_full_current} can be solved via the water-filling solution~\cite{tse_fundamentals_2005}.
The highly nonlinear dependence of the singular values \(\sigma_i\) on \(\epsilon(\vb r)\), however, makes this problem non-convex. Numerical solutions of~\eqref{eq:shannon_full_current} via gradient-based topology optimization, which relaxes the discrete optimization problem over \(\epsilon(\vb r)\) to a continuous counterpart~\cite{christiansen_inverse_2021}, can lead to useful designs, but do not provide bounds on the Shannon capacity over all possible structures. To obtain a bound on the Shannon capacity that is independent of geometry, one may relax~\eqref{eq:shannon_full_current} by ignoring the precise relation between singular values and the Green's function and instead encode structural information by imposing the same energy conservation constraints of Eq.~\eqref{eq:poynting} on the calculation of singular values. Thus, one may write
\begin{equation} \label{eq:shannon_relax_1} \begin{aligned}
    \underset{\mathbf{J}, \abs{\boldsymbol{\sigma}}^2}{\max} \qquad &  \sum_{i=1}^{n_c} \log_2 \left( 1 + \gamma J_i \abs{\sigma_i}^2 \right) \\
    \text{s.t.} \qquad & J_i \geq 0, \abs{\sigma_i}^2 \geq 0, \quad \forall i \leq n_c \\ 
    & \sum_i J_i \leq 1 \\ 
    & \sum_i \abs{\sigma_i}^2 \leq M_\mathtt{F} \\ 
	& \abs{\sigma_{\mathrm{max}}}^2 \leq M_1 
\end{aligned} \end{equation}
where \(M_\mathtt{F}, M_1\) are bounds on the square of the Frobenius norm \(\norm{\G_{t,RS}}_\mathtt{F}^2 \equiv \sum_i \abs{\sigma_i}^2\) and spectral norm \(\norm{\G_{t,RS}}_2^2 \equiv \abs{\sigma_{\textrm{max}}}^2 \) of the channel matrix, respectively.
Each of these bounds can be calculated in a structurally independent way while maintaining crucial information on the underlying physics of Maxwell's equations (see Appendix~\ref{asec:channel_bounds}). 
As long as constraints on \(\abs{\sigma_i}^2\) are bounds over all possible structures allowed in Eq.~\eqref{eq:shannon_full_current}, solutions to Eq.~\eqref{eq:shannon_relax_1} represent structure-agnostic limits on Shannon capacity between the \(S\) and \(R\) regions. 

Again, ignoring the limits \(M_\mathtt{F}\) and \(M_1\), Eq.~\eqref{eq:shannon_relax_1} is solved by the water-filling solution (see Appendix~\ref{asec:waterfilling}), leading to a cut-off in the number of utilized channels.
In full generality, the objective is biconvex and therefore difficult to solve~\cite{floudas_optimization_2000}.
While numerical methods to find global optima of biconvex functions exist (see for example~\cite{floudas_optimization_2000}),  existing implementations do not easily allow for this objective function (see for example the cGOP package). Development and application of techniques to solve for the global optimum of~\eqref{eq:shannon_relax_1} is a promising avenue of future work. 
Surprisingly, however, incorporating the Frobenius norm bound \(M_\mathtt{F}\) but excluding the maximum singular value bound \(M_1\) makes ~\eqref{eq:shannon_relax_1} amenable to closed-form solutions. As shown in Appendix~\ref{asec:shannon_relax_proof1}, the optimal solution to this problem may be written as follows:  
\begin{gather} \label{eq:general_sol_shannon}
C(M_F) = f_{n=n_{\mathrm{opt}}} = n_{\mathrm{opt}}\log_2 \left(1 + \gamma'/ n_{\mathrm{opt}}^2 \right) \\ n_{\mathrm{opt}} = 
\begin{cases}
     \floor{r\sqrt{\gamma'}} \textrm{ or } \ceil{r\sqrt{\gamma'}}, \quad & \floor{r\sqrt{\gamma'}} \leq n_c \\ 
     n_c, \quad & \floor{r\sqrt{\gamma'}} \geq n_c
\end{cases}
\end{gather}
where \(n_{\mathrm{opt}} \geq 1\), \(r = \max_{x\in [0, 1]} \left[ x \log_2 \left(1+1/x^2\right) \right] \approx 0.505 \), and we have introduced the \textit{total} signal-to-noise ratio \(\gamma' = \gamma M_\mathtt{F}\); note that the floor and ceiling functions are a consequence of the number of populated channels being an integer allocation.
This solution corresponds to evenly allocating capacity and power among a subset of all channels.
Notably, \(r\sqrt{\gamma'} \leq 1 \implies n_{\mathrm{opt}} = 1\) and \(r\sqrt{\gamma'} \geq n_c \implies n_{\mathrm{opt}} = n_c\). 
In the high-noise regime (i.e., SNR dominated by increasing channel capacities), the Shannon capacity is maximized by allocating all available channel capacity and current to a single channel. 
As noise is lowered, the number of channels increases by integer amounts until the low-noise regime (i.e., SNR dominated by low noise, regardless of channel capacities) is reached, in which case every channel has capacity \(M_\mathtt{F}/n_c\) and power \(P/n_c\).
In both cases, current allocations naturally satisfy the water-filling solution given their corresponding channel capacity allocation. 
As shown in Appendix~\ref{asec:shannon_relax_proof1}, asymptotic solutions in the low and high SNR regimes take the form
\begin{equation} \begin{aligned}
    C(M_\mathtt{F}) &\xrightarrow[]{\gamma' \ll 1} \gamma' / \log 2, \\ 
    C(M_\mathtt{F}) &\xrightarrow[]{\gamma' \gg 1} n_c \log_2 (\gamma' / n_c^2),
\end{aligned} \end{equation} 
in agreement with the analysis above. 
Intuitively, this solution is unrealistic in the low-noise regime: 
the compactness of \(\G\) implies that singular values must decay, making even allocation of singular values among all channels impractical.
Regardless, this method provides a simple way to calculate limits on Shannon capacity given only a sum-rule on the Frobenius norm \(M_\mathtt{F}\) of the channel matrix.
In the high-noise and low-noise regimes, it is straightforward to add the maximum singular value bound \(M_1\) to this analysis; this solution is denoted \(C(M_\mathtt{F}, M_1)\).
If \(r\sqrt{\gamma'} \leq 1\) (the regime where \(n_{\mathrm{opt}} = 1\)) a simple substitution of \(M_\mathtt{F} \to M_1\) yields a bound. 
In general, if \(M_\mathtt{F}/n_{\mathrm{opt}} \leq M_1\), the maximum singular value constraint is not active in the solution \(C(M_\mathtt{F}, M_1)\) and so it coincides with \(C(M_\mathtt{F})\). The amenability of this rigorous formulation to analysis via the familiar language of channel capacity and power allocation makes this conceptual ground for further studies of the impact of structuring and SNR on optimal channel design.

\subsection{Relation to Prior Work}

The challenge of~\eqref{eq:Shannon_GaussianTO} from an optimization perspective is the structural optimization over $\epsilon(\vb r)$ which is both high-dimensional and non-convex. The main novelty of our results is the ability to account for the deliberate engineering of the propagation environment through such structural optimization, which is increasingly relevant given the rise of integrated photonics platforms. Mathematically, prior work on bounding structural optimization in photonics focused predominantly on objectives that are quadratic functions of the fields~\cite{chao_physical_2022,angeris_heuristic_2021}; here we bound the Shannon capacity $\log\det$ objective using a relaxation with a probability distribution based interpretation that ties directly into the probabilistic proof of Shannon's noisy channel coding theorem~\cite{shannon_mathematical_1948,cover_information_2006,mackay_inference_2019}. 

If the structure \(\epsilon(\vb r)\) and therefore the channel matrix \(\G_{t,RS}\) is fixed, the Shannon capacity is given by a convex optimization problem that can be solved via the water-filling solution. Prior work on Shannon capacity in electromagnetism have generally made such a simplifying assumption, either taking the propagation environment to be fully vacuum~\cite{miller_communicating_2000,miller_waves_2019,kuang_communication_channel_bounds_2022} or considering a fixed antenna structure in free space~\cite{ehrenborg_bounds_MIMO_2017,ehrenborg_bounds_MIMO_2020,ehrenborg_capacity_bounds_2021}. 
Specifically, Refs.~\cite{ehrenborg_bounds_MIMO_2017,ehrenborg_bounds_MIMO_2020,ehrenborg_capacity_bounds_2021} investigated a planar antenna sheet surrounded by an idealized spherical shell receiver that captures all waves propagating through free space. 
Ref.~\cite{kuang_communication_channel_bounds_2022} calculated the Shannon capacity in vacuum between concentric sphere-shell sender and receiver regions: due to the domain monotonicity of the singular values of $\G_0$~\cite{molesky_mathbbt-operator_2019}, this bounds the capacity between smaller sender-receiver regions completely enclosed within the larger sphere-shell. 
Related works~\cite{miller_communicating_2000, miller_waves_2019} have established sum-rule bounds on these vacuum singular values, establishing a qualitative connection between information transfer and the notion of maximizing field intensity at the receiver. 

There are also differences in the exact constraint used to restrict the input current magnitude: Ref.~\cite{kuang_communication_channel_bounds_2022}~implicitly used a Frobenius norm constraint on the input current covariance, which is similar to the insertion-loss dominated case of~\ref{sec:biconvex_relax}. We explicitly impose a constraint on the total power required to maintain the input currents, including not just insertion loss but also the power going into propagating fields, similar to Ref.~\cite{ehrenborg_bounds_MIMO_2017,ehrenborg_bounds_MIMO_2020,ehrenborg_capacity_bounds_2021}. Refs.~\cite{ehrenborg_bounds_MIMO_2017,ehrenborg_bounds_MIMO_2020} also considered further restrictions on the antenna efficiency, defined as the ratio of radiated power to total power; Ref.~\cite{ehrenborg_capacity_bounds_2021} included constraints on the ratio of stored vs. radiated power in the antenna as a proxy for the operating bandwidth. 
Our bound formulation \eqref{eq:direct_convex_bound} given in the previous section is flexible enough to accommodate such additional constraints; given the conceptual richness that structural optimization adds to the problem, we have opted to only include the drive power constraint \eqref{eq:drive_power_constraint} in this manuscript for simplicity. 

\section{Illustrative Examples} \label{sec:numerics}

\begin{figure*}[t]
    \centering
    \includegraphics[width=\textwidth]{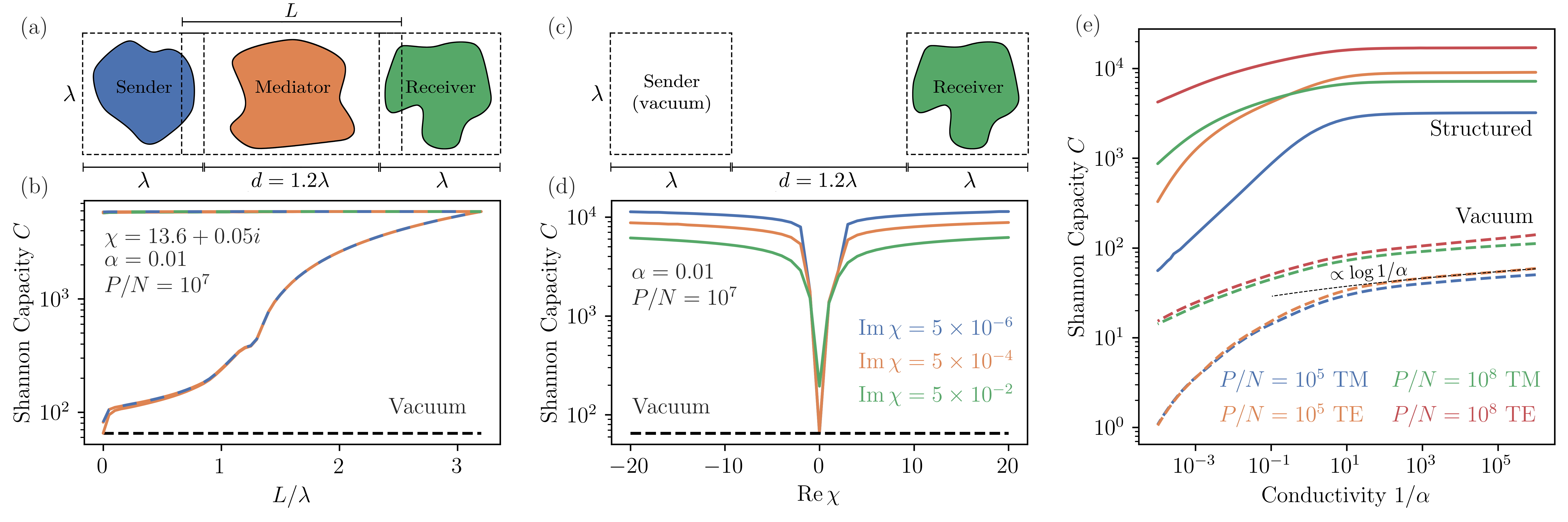}
    \caption{
    Upper bounds on the Shannon capacity $C$ computed via the convex relaxation of Eq.~\eqref{eq:direct_convex_bound}. (a) Schematic of geometrical configuration pertaining to (b), showing sender and receiver regions of size \(\lambda \times \lambda\) separated by a surface-surface distance \(1.2\lambda\) by a mediator region of variable size \(L \times \lambda\). 
    (b) Bounds on $C$ as a function of the mediator length \(L\) for different structuring combinations of sender, mediator, and/or receiver regions.
    Results highlight the out-sized impact of structuring the receiver region in this regime of extraction-dominated power. 
    Zebra-coloring in (b) corresponds to designing the corresponding regions as colored in (a) (e.g., orange-blue-green zebra coloring pertains to the scenario where all three regions are designable). 
    (c) Schematic pertaining to (d) and (e), where now only the receiver region is designable. 
    (d) Bounds as a function of \(\Re \chi\) corresponding to structuring only the receiver region, corroborating the importance of engineering field enhancements at the receiver.
    (e) Limits on $C$ for TE and TM polarizations (solid lines) and associated vacuum values (dashed lines) as a function of \(1/\alpha\) consisting of a structured receiver and vacuum sender of sizes \(\lambda \times \lambda\) a distance \(1.2\lambda\) apart. 
    Results showcase a logarithmic divergence of these limits with decreasing $\alpha$ stemming from the increased contribution of evanescent ``dark currents" to information transfer and potential enhancements to Shannon capacity by more than an order of magnitude via structuring the receiver.
    Note that \(\alpha\) has units of resistivity \(Z \lambda\) and \(P/N\) has units of \(\epsilon_0 c \lambda\). 
    }
    \label{fig:convex_combined}
\end{figure*}

\begin{figure}
    \centering
    \includegraphics[width=1\linewidth]{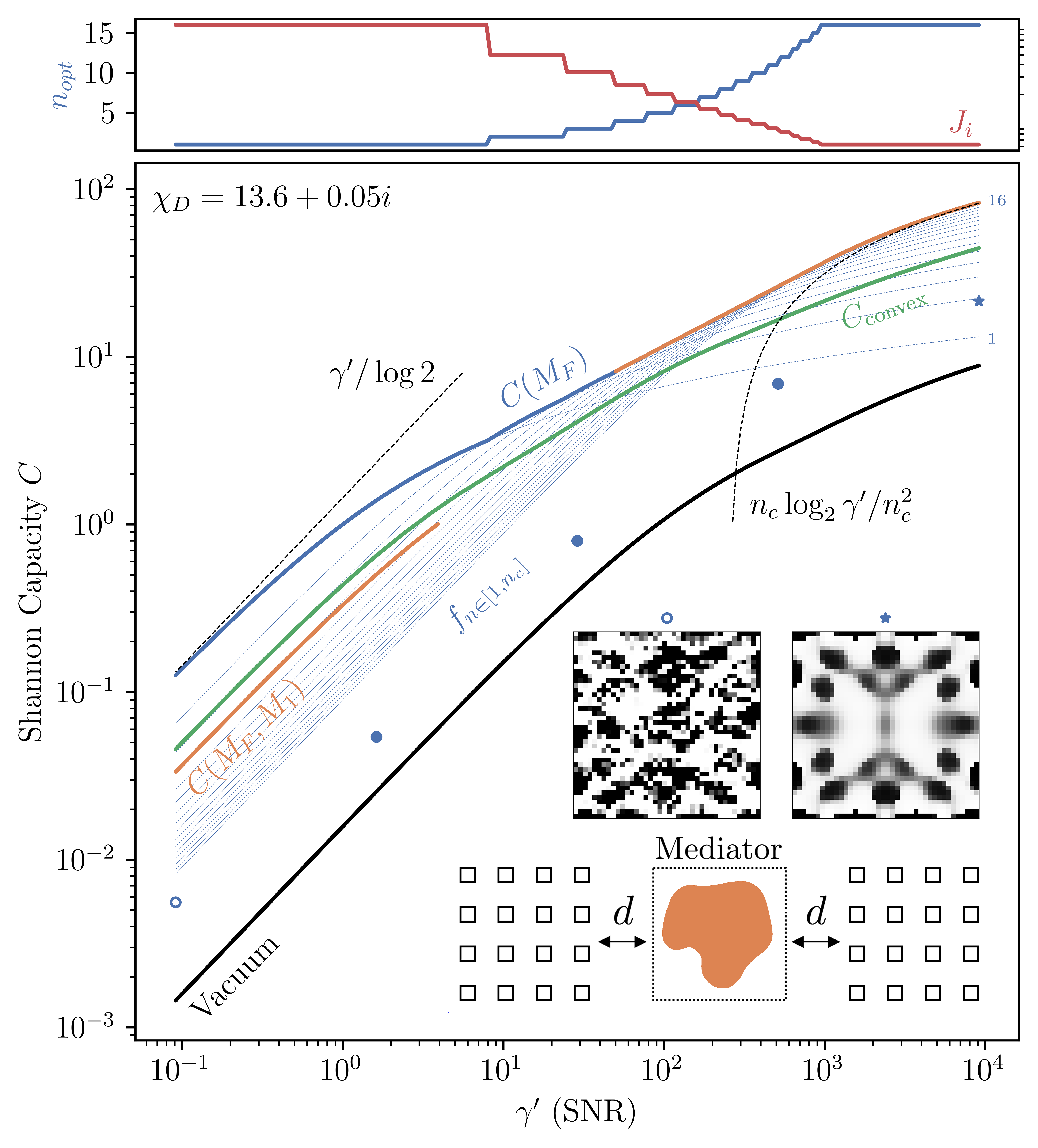}
    \caption{Shannon capacity limits (solid colored lines), inverse designs (dots), and associated vacuum performance (black line) in the insertion-dominated power regime as a function of total signal-to-noise ratio (SNR) \(\gamma'=\frac{P}{\alpha\omega^2 N} M_\mathrm{F}\) for \(4 \times 4\) source and receiver arrays spanning a space of size \(\lambda \times \lambda\) connected by a designable mediator region of susceptibility \(\chi = 13.6+0.05i\). 
    As described in the main text, \(C(M_\mathtt{F})\) is an analytic solution to the bounding problem subject to a Frobenius norm constraint $\sum_i |\sigma_i|^2 \leq M_\mathrm{F}$, while \(C(M_\mathtt{F}, M_1)\) incorporates an additional constraint on the maximum singular channel capacity $|\sigma_\mathrm{max}| \leq M_1$. 
    In order to show how \(C(M_\mathtt{F})\) is a pointwise maximum of uniform allocation solutions \(f_n\) (Eq.~\eqref{eq:general_sol_shannon}), \(f_n\) is plotted (dashed blue) for all \(n\in[1,n_c=16]\).
    Analytically derived asymptotic solutions (dashed black lines) incorporating trace constraints are also shown to converge to their respective analytic solution within this range of SNR. 
    Bounds \(C_\textrm{convex}\) obtained via the direct convex relaxation of Eq.~\eqref{eq:direct_convex_bound} are also computed in the insertion-dominated limit of large \(\alpha \gg \norm{\Asym{\G_{0,SS}}}/2\omega\). Two representative inverse designs for low (left inset) and high (right inset) SNR regimes are shown on the plot. 
    Notably, their performance is within an order of magnitude of the tightest bound in each regime. For this problem, \(M_1 \approx M_\mathtt{F}/4\).
    }
    \label{fig:array_results}
\end{figure}

In the prior section, we derived two methods to compute upper bounds on the Shannon capacity between sender and receiver regions containing any possible photonic structure in either region or their surroundings. 
As described in Section~\ref{sec:theory}, the total drive power constraint Eq.~\ref{eq:shannon_objective_power} can be considered in two regimes. 
In one regime, handled by the direct convex relaxation, the extracted power (determined by Maxwell's equations) and the insertion power (as abstracted by a resistivity associated with driving the initial currents) are comparable: 
\(\frac{1}{2\omega} \norm{\Asym \G_{t,SS}}_2^2 \sim \alpha \).
In the other, treated by both the direct convex relaxation and the simpler biconvex problem, the cost of driving initial currents dominates power consumption (i.e., insertion-dominated): \(\frac{1}{2\omega} \norm{\Asym \G_{t,SS}}_2^2 \ll \alpha \).
In this section, we evaluate bounds in both of these contexts for several illustrative 2D configurations, showcasing the possibility of enhancing information transfer via photonic design and also revealing scaling characteristics with respect to material and design parameters. 
We compute bounds numerically both for out-of-plane (scalar electric field) polarization (TM) and in-plane (vector electric field) polarization (TE) using a finite-difference frequency domain (FDFD) discretization; see Appendix~\ref{asec:discretization} for more details on evaluating the Shannon capacity over a discrete grid. 
In the particular case of insertion-dominated power, we also compare the bounds with structures obtained via topology optimization, using an in-house automatic differentiable FDFD Maxwell solver written in JAX~\cite{jax_github_2018, optax} to compute gradients of the Shannon capacity with respect to structural degrees of freedom. 
Topology optimization of Shannon capacity is challenging due to its dependence on the total Green's function; more efficient methods to compute these structures is the subject of future work.
    
\emph{Direct Convex Relaxation:} We consider a scenario, shown schematically in Fig.~\ref{fig:convex_combined}(a), involving sender (\(S\)) and receiver (\(R\)) square regions of size $\lambda \times \lambda$ with centers aligned along the $x$-axis and separated by a surface--surface distance of $1.2\lambda$. A rectangular mediator region \(M\) of length $L$ and width $\lambda$ (along the $y$ dimension) is placed between \(S\) and \(R\). Bounds are computed by numerically solving (\ref{eq:direct_convex_bound}) with energy conservation imposed globally over the entire design region $I_D$, i.e. a single projection domain/constraint.

To investigate the impact of structuring different regions on the Shannon capacity, Fig.~\ref{fig:convex_combined}(b) shows bounds for TM polarized sources, assuming a fixed impedance $\alpha=0.01 Z\lambda$ and SNR \(P/N=10^7 \epsilon_0 c \lambda\), as a function of $L$ and for different combinations of designable $S$, $M$, and $R$ regions. As $L$ increases, the mediator eventually overlaps with and completely encloses the \(S\) and \(R\) regions. 
The results reveal a striking feature: in contrast to designing the sender, structuring the receiver has by far the greatest impact on the Shannon capacity. 
Intuitively, designing \(R\) allows large field concentrations \(\ve_R\). Such density of states enhancements are further corroborated by the dependence of the bounds on material susceptibility $\chi$ seen in Fig.~\ref{fig:convex_combined}(d) corresponding to only structuring the receiver region (Fig~\ref{fig:convex_combined}(c)). In particular, bounds grow rapidly with $\abs{\Re\chi}$ for weak materials before saturating, and exhibit highly subdued scaling with respect to material loss (\(\Im \chi\)) consistent with resonant enhancements also seen in prior works on photonic limits~\cite{venkataram_fundamental_2020-1,molesky_hierarchical_2020}: decreasing $\Im\chi$ by two orders of magnitude yields an increase in $C$ of less than a factor of two. The exact nature of this scaling behavior is the subject of future work.

The apparent ineffectiveness of structuring the sender to increase $C$ stems from the implicit freedom associated with designing the input drive, which allows for arbitrary input currents $\vb{j}_i$ within the $S$ region, capped in magnitude only by the finite drive power constraint. In particular, the field $\vb{e}_R$ produced by a free current source $\vb{j}_i$ in a structured $S$ region containing a polarization current $\vb{j}_s$ is in fact equivalent to one produced by a corresponding ``dressed" input current $\vb{j}_t = \vb{j}_i + \vb{j}_s$ in vacuum. 
The imposition of a finite drive power requirement with an insertion power contribution does, however, create a distinction between these two scenarios. In the $\alpha \rightarrow 0$ limit where there is no power cost to modifying the input current $\vb{j}_i$, there is no advantage in structuring since non-zero $\vb{j}_s$ incurs further material losses compared to inserting $\vb{j}_t$ directly in vacuum. Conversely, in the insertion dominated regime $\alpha \gg \norm{\Asym{\G_{t,SS}}}_2/2\omega$ increasing $\vb{j}_i$ has a steep cost, making it more advantageous to achieve larger field enhancements by increasing $\vb{j}_t$ via structuring. Clearly, the relative importance of insertion- versus extraction-dominated power as determined by $\alpha$ plays a central role in determining how $C$ should be optimized.

Figure~\ref{fig:convex_combined}(e) shows bounds on \(C\) as a function of \(1/\alpha\) for both TM and TE sources, allowing for structuring only in \(R\) but fixing the distance between the \(S\) and \(R\) regions to be $1.2\lambda$ (see inset). In the extraction dominated regime ($\alpha \rightarrow 0$), the Shannon capacity exhibits a logarithmic dependence on $\alpha$ attributable to ``dark currents'' which do not radiate into the far field as input signals. 
Mathematically, these dark currents lie within the nullspace of $\Asym\G_{SS}$, so the extracted power contribution $-\frac{1}{2\omega}\Re{\vb{j}_i^\dagger \G \vb{j}_i}$ to the drive power constraint~\eqref{eq:drive_power_constraint} is $0$.  While dark currents do not radiate nor consume power, they nonetheless produce evanescent fields that are picked up by a receiver at a finite distance, allowing for information transfer. 
The dark current amplitudes are limited by $\alpha$, so for fixed $P/N$ their contribution to the covariance $\J$ is inversely proportional to $\alpha$, resulting in a logarithmic dependence of $C$ on $1/\alpha$ via the $\log_2\det$ objective.  This leads to a logarithmic divergence in the Shannon capacity as \(\alpha \to 0\), even for finite power. As $\alpha$ increases, $C$ decreases due to power restrictions on input currents, eventually reaching the insertion dominated regime ($\alpha \to \infty$) where the extracted power is dwarfed by the insertion impedance term $\alpha \abs{\vb{j}_i}^2$. 
Neglecting the extracted power, the resulting power constraint on the input currents \(\sim \abs{\vj_i}^2 \leq P/\alpha\) leads to a problem that can be bounded via both the direct convex relaxation and the biconvex relaxation of \eqref{eq:shannon_relax_1}, as illustrated below. 

\emph{Insertion-Dominated Power:} 
In this section we study the insertion dominated regime (\(\alpha \gg \frac{1}{2\omega} \norm{\Asym \G_{t,SS}}\)), where one can also apply the biconvex relaxation given in Eq.~\eqref{eq:shannon_relax_1}. 
As described in Section~\ref{sec:biconvex_relax}, this simplification permits a definition of channels as the singular vectors of \(\G_{t,RS}\), with the relevant physics encoded via sum-rules on the associated singular values. This leads to a simplified, biconvex optimization problem~\eqref{eq:shannon_relax_1} with two analytic solutions: \(C(M_\mathtt{F})\) maximizes Shannon capacity subject to a bound $M_\text{F}$ on the maximum sum of all channel capacities (Frobenius norm), and \(C(M_\mathtt{F}, M_1)\) additionally incorporates a bound on the spectral norm \(M_1\) or the maximum channel capacity of the Green's function. The resulting bounds depend only on the single universal total signal-to-noise ratio parameter \(\gamma' = \frac{P}{\alpha \omega^2 N} \times M_\text{F}\).

We consider a simplified scenario shown schematically in Fig.~\ref{fig:array_results}, consisting of discrete \(S\) and \(R\) regions comprising $4\times 4$ vacuum pixel arrays separated by a designable contiguous $M$ mediator of material $\chi$.  Hence, there are \(16\) orthogonal singular vector channels available for communication. 
The figure shows bounds \(C(M_\mathtt{F})\) and \(C(M_\mathtt{F}, M_1)\) alongside comparable inverse designs. 
Notably, the tightest bounds are found to be within or near an order of magnitude of achievable performance for all studied noise regimes, with particularly realistic performance for high SNR.
The additional tightness in \(C(M_\mathtt{F}, M_1)\) compared to \(C(M_\mathtt{F})\) for high-noise regimes motivates extending the domain of this solution via further analysis. 
The corresponding channel occupation \(n_{\mathrm{opt}}\) and current amplitude allocation \(J_i = 1/n_{\mathrm{opt}}\) for \(C(M_\mathtt{F})\) (shown above the main plot) demonstrate the nonlinear dependence of these two values on noise. 
Even allocation \(f_{n\in[1,16]}\) are also shown, demonstrating how the solution \(C(M_\mathtt{F})\) is a pointwise maximum of these uniform allocations. 
As discussed earlier, higher SNR pushes allocation of \(J_i\) and channel capacity to a single channel leading to logarithmic dependence of the Shannon capacity on $\gamma'$, while low SNR pushes even allocations of \(J_i\) and channel capacity across all channels, leading instead to linear scaling.  This is consistent with the results presented in Fig.~\ref{fig:convex_combined}, where large and small \(1/\alpha\) result in logarithmic and linear scaling of Shannon capacity, respectively. The biconvex formulation, however, suggests that while it may not be physically possible to optimize for any desired channel matrix, the SNR determines a target number of high-capacity channels. 
Accordingly, high SNR structures display higher symmetry than the seemingly random structure at low SNR owing to maximizing information transfer among one or many channels, respectively.
Finally, the figure also shows bounds \(C_\textrm{convex}\) pertaining to the convex relaxation of Eq.~\eqref{eq:direct_convex_bound} and obtained by enforcing large \(\alpha\) and \(P\) so as to reach the insertion-dominated power regime. The latter is found to be looser than \(C(M_\mathtt{F}, M_1)\) in the low SNR regime and tighter in the high SNR regime, showcasing the versatility of the convex formulation and, despite its complexity, its general agreement with the conceptually simpler biconvex relaxation.

\section{Concluding Remarks}

In this article, we formalized an appropriate figure of merit for maximizing electromagnetic communication subject to arbitrary photonic structuring, and connected it to existing work restricted to vacuum propagation and/or fixed antenna geometries. 
The proposed framework rigorously connects the mathematics of electromagnetic wave propagation in photonic media to key information-theoretic quantities such as the mutual information and the Shannon capacity. 
We derived two possible relaxations of this problem (one valid in general settings and the other in regimes where insertion losses dominate) that encode wave physics from Maxwell's equations along with drive-power considerations, to arrive at structure-agnostic upper bounds on Shannon capacity. 

These bounds not only yield quantitative performance targets, but also critical scaling information with respect to device size, SNR, and material strength. In particular, our numerical evaluation of illustrative examples has highlighted the importance of optimizing the receiver, with more than an order of magnitude potential improvement of the Shannon capacity. The bounds have weak scaling with material loss, a fact with both theoretical and practical significance and likely related to similar sub-linear (logarithmic) scaling with loss seen in prior works involving distributed sources such as in near-field radiative heat transfer~\cite{venkataram_fundamental_2020-1}. 
When radiated power costs are dominant, the Shannon capacity displays logarithmic growth as $\alpha \rightarrow 0$ driven by nonradiative dark currents. 
As \(\alpha\) is increased and insertion impedance restricts input currents, we observe capacity scaling with SNR transitioning from linear for low SNR to logarithmic for high SNR, alongside a gradual increase in the number of optimal sub-channels used. Furthermore, topology optimized structures are shown to generally achieve performance within an order of magnitude of our bounds across a wide range of SNR. 

Looking ahead, we believe there is significant promise for further research into photonics design for information transfer. 
The results in this manuscript were computed in 2D and serve mainly as a proof of concept; we anticipate that our formulation may be adapted to study the information capacity of practical devices of increasing interest such as on-chip optical communication and image processing using integrated photonics. 
Doing so would also require improving computational techniques for evaluating the bounds at larger scale and in 3D, along with the development of new inverse design schemes specifically for optimizing information transfer. 
Our current results are also derived at a single frequency; further work remains to be done on generalizing the bounds for finite bandwidths, perhaps in the spirit of spectral sum rules~\cite{barnett_sum_1996,scheel_sum_2008,zhang_all_2023} or delay-bandwidth products~\cite{tucker_slow-light_2005,miller_fundamental_2007}. Another promising line of study is to incorporate macroscopic quantum electrodynamics~\cite{scheel_macroscopicQED_2009} into the formulation for investigating information transfer in quantum communication systems.  

\textbf{Funding: } we acknowledge the support by a Princeton SEAS Innovation Grant and by the Cornell Center for Materials Research (MRSEC). 
S.M. acknowledges financial support from the Canada First Research Excellence Fund via the Institut de Valorisation des Données (IVADO) collaboration.
The simulations presented in this article were performed on computational resources managed and supported by Princeton Research Computing, a consortium of groups including the Princeton Institute for Computational Science and Engineering (PICSciE) and the Office of Information Technology's High Performance Computing Center and Visualization Laboratory at Princeton University.

\appendix
\raggedbottom   

\section{Water-filling Solution} \label{asec:waterfilling}
As stated in the main text, the convex problem 
\begin{equation} \label{eq:simple_waterfilling}
    \max_{\vb{J} \succ \vb{0}, \Tr(\vb{J})\leq P} \quad \log_2 \det(\mI + \mH\vb{J}\mH^\dagger)
\end{equation}
can be solved via the water-filling solution. 
In this appendix, this solution is detailed. 
As in the main text, taking the singular value decomposition of \(\mH = \vb U\Sigma \vb V^\dagger\) gives the objective \(\log_2 \det (\mI + \Sigma \vb J' \Sigma )\), where \(\vb J' = \vb V^\dagger \vb J \vb V\). 
By Hadamard's inequality~\cite{różański_hadamard_2017}, this is maximized when \(\vb J'\) is diagonal, and we label these diagonal entries \(J_i\). 
The problem becomes
\begin{equation}
    \max_{J_i \geq 0, \sum J_i \leq P} \quad \log_2 \prod (1 + \sigma_i^2 J_i)
\end{equation}
where \(\sigma_i\) are the diagonal entries of \(\Sigma\). 
This equation is analogous to Eq.~\ref{eq:shannon_relax_1}.
The Lagrangian and its derivative can then be written 
\begin{equation}\begin{gathered} 
    \mathcal{L} = \sum \log(1+\sigma_i^2 J_i) + \lambda(P-\sum J_i) + \sum \mu_i J_i \\ 
    \frac{\partial \mathcal{L}}{\partial J_i} = \dfrac{\sigma_i^2}{1+\sigma_i^2 J_i} - \lambda + \mu_i = 0
\end{gathered}\end{equation}
where \(\lambda, \mu_i\) are the Lagrange multipliers. Complementary slackness and primal feasibility guarantee that at the optimal point,  \(\mu_i J_i = 0\) for all \(i\) and \(\sum J_i = P\). If \(\mu_i = 0\), then \(J_i = \frac{1}{\lambda} - \frac{1}{\sigma_i^2}\). Otherwise, \(J_i = 0\) and \(\mu_i = \lambda - \sigma_i^2\). Thus, the allocations \(J_i\) can be chosen to be 
\begin{equation}
    J_i = \max \left\{ 0,\lambda^{-1} - \sigma_i^{-2} \right\}
\end{equation}
where the value of \(\lambda\) is determined by the primal feasibility condition \(\sum \max \{ 0, \lambda^{-1} - \sigma_i^{-2} \} = P\). 
This solution can be interpreted as ``filling" channels of largest \(\sigma_i\), leaving large \(\sigma_i^{-2}\) unallocated, until all the ``water" (power) is consumed. 
An immediate consequence is that a finite number of ``channels" with capacities \(\sigma_i\) are utilized, and therefore for finite power there is always a finite-rank approximation of \(\vb H\) that is sufficient to model the communication problem. 

A weighted water-filling problem 
\begin{equation}
    \max_{\vb{J} \succ \vb{0}, \Tr(\vb{BJ})\leq P} \quad \log_2 \det(\mI + \mH\vb{J}\mH^\dagger)
\end{equation}
for a known positive-definite matrix \(\vb B\) can be solved by performing any decomposition \(\vb B = \vb U^\dagger \vb U\) (e.g., square root, Cholesky decomposition). The transformed problem takes the form 
\begin{equation}
    \max_{\vb{\widetilde{J}} \succ \vb{0}, \Tr(\vb{\widetilde{J}})\leq P} \quad \log_2 \det(\mI + \widetilde{\mH}\vb{\widetilde{J}}\widetilde{\mH}^\dagger)
\end{equation}
with \(\widetilde{\mH} = \mH \vb U^{-1}\) and \(\widetilde{\vb J} = \vb{UJU^\dagger}\), giving a problem of the form of Eq.~\eqref{eq:simple_waterfilling}.

\section{Convergence of Photonic Shannon Capacity with Regards to Discretization\label{asec:discretization}}
In principle, the photonic Shannon capacity involves input and output signals that can be infinite-dimensional: currents and fields over continuous spatial regions. In order to numerically solve Shannon capacity maximization problems such as (\ref{eq:Shannon_GaussianTO}), discretization of the continuous problem is necessary. This section discusses how the spatial resolution $\Delta$ of discretization schemes such as finite-difference frequency domain (FDFD) or the finite element method (FEM) enters into photonic Shannon capacity calculations, and demonstrates that the photonic Shannon capacity between the $S$ and $R$ region for any given structure $\epsilon(\vb{r})$ converges in the infinite resolution limit $\Delta \rightarrow 0$. 

Consider the covariance maximization problem for evaluating Shannon capacity
\begin{subequations}
    \begin{align}
        \max_{\J} \quad & \log_2 \det(\Id + \dfrac{P}{N\omega^2} \G_{t, RS}(\epsilon)\J \G_{t, RS}^\dagger (\epsilon)), \label{eq:app_shannon_objective} \\
        \text{s.t.} \quad & \Tr( \J) \leq 1, \label{eq:app_trCstrt}\\
        & \J \succeq \vb{0},
    \end{align}
    \label{eq:app_ShannonC}
\end{subequations}
where for simplicity we have taken the insertion-loss dominated power constraint and set $\alpha = 1$. $\J$ is the power-scaled covariance of the input currents \(\J(\vb{x},\vb{y}) = \EX[\vj_i(\vb{x}) \vj_i^\dagger(\vb{y})]/P\), and the output signal are the electric fields $\vb{e}_R = \frac{i}{\omega} \G_{t,RS} \vj_i + \vb{n}$, where $\vb{n}(\vb{r})$ is spatially distributed complex Gaussian white noise with $\EX[\vb{n}(\vb{r})] = 0$ and auto-covariance $\EX[\vb{n}(\vb{x})\vb{n}(\vb{y})^\dagger] = N \hat{\vb{I}} \delta(\vb{x}-\vb{y})$. 

Upon discretization, the continuous fields become discrete vectors and the operators become finite size matrices, with factors of $\Delta$ included in expressions as appropriate. One place where a resolution factor is needed is the trace constraint (\ref{eq:app_trCstrt}), which comes from the cyclic permutation of $\int_S |\vb{j}(\vb{r})|^2 \,\dd \vb{r} = P$. Discretization converts this to a numerical quadrature expression which involves multiplication by the voxel size $\Delta$.

Another more subtle scaling with $\Delta$ is the covariance of the discretized noise vector $\vb{n}$ which satisfies $\EX[\vb{n}_i \vb{n}_j^\dagger] = (N/\Delta) \vb{\hat{I}} \delta_{ij}$ where $\delta_{ij}$ is the Kronecker delta. Without going through detailed stochastic analysis~\cite{grimmett_probability_2001}, this inverse $\Delta$ scaling can be understood intuitively from the perspective that the entries of a discretized vector represent a voxel average of the continuous field. If we halve the resolution and use voxels of size $2\Delta$, the discretized noise over the larger voxel can be viewed as the average of the discretized noise of 2 $\Delta$-sized voxels, and averaging 2 independent identically distributed random variables reduces the variance by a factor of $1/2$. The limit $\Delta \rightarrow 0$ recovers the delta function auto-covariance of the continuous spatial white noise. 

A discretized version of (\ref{eq:app_ShannonC}) is thus given by
\begin{align*}
    \max_{\vb{J}} \quad & \log_2 \det(\vb{I} + \dfrac{P}{(N/\Delta) \omega^2} \vb{G}_{t, RS}(\epsilon) \vb{J} \vb{G}_{t, RS}^\dagger (\epsilon)), \\
        \text{s.t.} \quad & \Tr( \vb{J}) \Delta \leq 1, \\
        & \vb{J} \succeq \vb{0},
\end{align*}
where we have used regular boldface to denote finite matrices. If we now further absorb a factor of $\Delta$ into $\vb{J}$, we see that the resolution scaling in the noise and the trace cancel out:
\begin{subequations}
    \begin{align} \label{eq:app_resFree_ShannonC}
    \max_{\vb{J}} \quad & \log_2 \det(\vb{I} + \dfrac{P}{N/ \omega^2} \vb{G}_{t, RS}(\epsilon) \vb{J} \vb{G}_{t, RS}^\dagger (\epsilon)), \\
        \text{s.t.} \quad & \Tr( \vb{J}) \leq 1, \\
        & \vb{J} \succeq \vb{0}.
\end{align}
\end{subequations}
This is a convex optimization over the finite sized matrix $\vb{J}$ independent of any explicit resolution factors $\Delta$, which can be solved by setting $\vb{J}$ as diagonal in the SVD basis of $\vb{G}_{t,RS}$ and waterfilling, as described in Appendix~\ref{asec:waterfilling}; the result solely depends on the larger singular values of $\vb{G}_{t,RS}$. (\ref{eq:app_resFree_ShannonC}) thus converges to the photonic Shannon capacity of the continuous structure and fields as $\Delta \rightarrow 0$ just as the larger singular values of $\vb{G}_{t,RS}$ converge to those of $\G_{t,RS}$. Given that $\G_{t,RS}$ is low-rank when $R$ and $S$ do not overlap~\cite{hackbusch_hierarchical_matrices_2015}, there are only finitely many Green's function singular values that need to converge; this happens when $\Delta$ is small compared to the characteristic length-scales of the corresponding singular vectors. 

\section{Proof that the Convex Relaxation Yields a Finite Bound} \label{asec:SDP_PD}

The convex relaxation approach outlined in Section~\ref{sec:convex_relax} gives a bound written as the solution to maximizing an objective function of the form  $\log\det(\mI + \mH \mJ \mH^\dagger)$ subject to the PSD constraint $\mJ \succeq \vb{0}$ and various linear trace constraints on $\mJ$. 
Upon first sight, it may be unclear whether such an optimization has a finite maximum: one can imagine that the eigenvalues of $\mH\mJ\mH^\dagger \succeq \vb{0}$ may increase without bound and lead to a diverging $\log\det$, and indeed this is what happens without any linear trace constraints. 
The key for a finite maximum of \eqref{eq:direct_convex_bound} is the existence of a linear trace constraint of the form $\Tr[\vb{A} \mJ] \leq P_A$ where $\vb{A} \succ 0$ and $P_A>0$, i.e., the program
\begin{subequations}
    \begin{align}
        \max_{\mJ} \quad &\log\det(\mI + \mH\mJ\mH^\dagger) \\
        \text{s.t.} \quad & \Tr[\vb{A} \mJ] \leq P_A \\
        & \mJ \succeq 0
    \end{align}
    \label{eq:PD_waterfilling}
\end{subequations}
has a finite maximum. 
To see this, observe that given strictly definite $\vb{A}$, it must be the case that $\Tr[\vb{A}\mJ]>0$ for all positive semidefinite $\mJ \neq \vb{0}$. 
Therefore, in the vector space of Hermitian matrices, the intersection between the cone of semidefinite matrices and the halfspace given by $\Tr[\vb{A}\vb{J}] \leq P_A$ is bounded and the $\log\det$ objective cannot diverge. 

In problem~\eqref{eq:direct_convex_bound}, we can use a simple sum of the drive power constraint \eqref{eq:tr_power_constraint} and resistive power conservation over the enter design domain [the imaginary part of \eqref{eq:tr_re_sca_constraint}] to construct such a linear trace inequality, with $\vb{A}$ given by
\begin{equation*}
    \mqty[\dmat{2\omega \alpha \Id_S, \Im(1/\chi^*) \Id_D}] + \mqty[\Asym(\G_{0,SS}) & \Im(\G_{0,SD}) \\ \Im(\G_{0,DS}) & \Asym(\G_{0,DD})].
\end{equation*}
The first matrix is a diagonal operator with positive entries and is positive definite. The second operator is positive semidefinite since it is a sub-operator of the positive semidefinite operator
\begin{equation}
    \mqty[\Asym(\G_{0,AA}) & \Asym(\G_{0,AA}) \\ \Asym(\G_{0,AA}) & \Asym(\G_{0,AA})]
    \label{eq:double_block_AsymG}
\end{equation}
where the domain $A = S \bigcup D$. To see that \eqref{eq:double_block_AsymG} is positive semidefinite, we start from the fact that $\Asym(\G_{0,AA})$ is positive semidefinite due to passivity~\cite{tsang_scattering_2004}, and therefore it has a Hermitian positive semidefinite matrix square root $\sqrt{\Asym(\G_{0,AA})}$. It suffices now to note that 
\begin{align*}
    &\mqty[\vb{x} \\ \vb{y}]^\dagger \mqty[\Asym(\G_{0,AA}) & \Asym(\G_{0,AA}) \\ \Asym(\G_{0,AA}) & \Asym(\G_{0,AA})] \mqty[\vb{x} \\ \vb{y}] \\
    &= \norm{\sqrt{\Asym(\G_{0,AA})} (\vb{x} + \vb{y})}_2^2 \geq 0. \numthis
\end{align*}

It is also worth pointing out that if the constraint matrix $\mA$ in \eqref{eq:PD_waterfilling} is indefinite, then the optimum diverges. To see this, consider any vector $\vb{h}$ outside the kernel of $\mH$, i.e., $\mH \vb{h} \neq \vb{0}$. If $\vb{h}^\dagger \mA \vb{h} < 0$, then we are done: $\mJ = \beta \vb{h}\vb{h}^\dagger$ will lead to a divergence of the objective as $\beta \rightarrow \infty$ while $\Tr[\mA \mJ] = \vb{h}^\dagger \mA \vb{h} \rightarrow -\infty < P_A$. If $\vb{h}^\dagger \mA \vb{h} \geq 0$, simply pick a vector $\vb{v}$ such that $\vb{v}^\dagger \mA \vb{v} < 0$; this is always possible given indefinite $\mA$. Then by constructing $\mJ = \beta_1(\vb{h}\vb{h}^\dagger + \beta_2 \vb{v}\vb{v}^\dagger)$ and choosing $\beta_2>0$ such that $\vb{h}^\dagger \mA \vb{h} + \beta_2 \vb{v}^\dagger \mA \vb{v} < 0$ the same argument as before holds: as $\beta_1 \rightarrow \infty$ the objective diverges thanks to the $\vb{h}\vb{h}^\dagger$ contribution while $\mJ \succ 0$ remains within the constraint halfspace. 

\section{Convex Relaxation Solution via Waterfilling Duality} \label{asec:SDP_numerics}

This appendix describes the details of solving the direct convex relaxation~\eqref{eq:direct_convex_bound} using an approach we dub the ``waterfilling dual'', which is a generalization of the solution technique used in \cite{ehrenborg_bounds_MIMO_2020}. Specifically, we seek the solution of a convex optimization problem over the $n \times n$ PSD matrix $\mJ$:
\begin{subequations}
    \begin{align}
        C = \max_{\mJ} \quad &f(\mJ) \equiv \log\det(\mI + \mH \mJ \mH^\dagger) \\
        \text{s.t.} \quad & \Tr[\mA \mJ] \leq P, \\
        &\Tr[\mB_l \mJ] = 0 \quad l \in [1,\cdots, m], \\
        & \mJ \succeq \vb{0},
    \end{align}
    \label{eq:general_waterfilling}
\end{subequations}
where $\mA \succ 0$ and we assume that the solution $\bar{\mJ}$ is non-zero, i.e., the $\mB_l$ and PSD constraints do not restrict $\mJ=\vb{0}$ and the PD $\mA$ constraint is needed for $C$ to be finite. While \eqref{eq:direct_convex_bound} is not explicitly of this form, Appendix \ref{asec:SDP_PD} shows how one can straightforwardly transform \eqref{eq:direct_convex_bound} into this form by taking linear combinations of constraints to obtain a PD constraint. The waterfilling dual approach to \eqref{eq:general_waterfilling} starts from the fact that a simplified version of \eqref{eq:general_waterfilling} without any $\Tr[\mB_l \mJ]=0$ constraints admits an analytical solution, i.e., waterfilling. Unfortunately, waterfilling does not directly generalize to multiple linear constraints, so we define the waterfilling dual function
\begin{subequations}
    \begin{align}
        \mathcal{D}(\vb{b}) = \max_{\mJ} \quad &\log\det(\mI + \mH \mJ \mH^\dagger) \\
        \text{s.t.} \quad & \Tr[(\mA + \sum_{l=1}^m b_l \mB_l) \mJ] \leq P, \\
        & \mJ \succeq \vb{0}.
    \end{align}
    \label{eq:waterfilling_dual}
\end{subequations}
The trace constraint in \eqref{eq:waterfilling_dual} is a linear superposition of the trace constraint in \eqref{eq:general_waterfilling}, so $\mathcal{D}(\vb{b}) \geq C$ for any combination of multipliers $\vb b$; furthermore, the value of $\mathcal{D}(\vb{b})$ can be evaluated using waterfilling. $\mathcal{D}$ can be understood as a \textit{partial} dual function of \eqref{eq:general_waterfilling} where we have not dualized the $\mJ \succeq \vb{0}$ constraint; $\mathcal{D}(\vb{b}) \geq C$ is thus a statement of weak duality. We now show explicitly that, as expected for a convex problem, \textit{strong duality} holds: the solution to the waterfilling dual problem
\begin{equation}
    \min_{\vb b} \quad \mathcal{D}(\vb b)
    \label{eq:min_waterfilling_dual}
\end{equation}
is exactly equal to $C$. To see this, consider the objective gradient at the primal optimal point $\grad f(\bar{\mJ})$. First order local optimality conditions imply that
\begin{equation}
    \grad f(\bar{\mJ}) = \left(\sum_{l=1}^m \bar{b}_l \mB_l\right) + a \mA + \sum s_k \vb{S}_k
    \label{eq:local_opt}
\end{equation}
where $a, s_k \geq 0$, and $\vb{S}_k$ are outward normal matrices of the PSD cone $\mathcal{S}^n_+$ locally at $\bar{\mJ}$: such a local description of $\mathcal{S}^n_+$ is possible due to the fact that any non-zero matrix on the boundary of $\mathcal{S}^n$ is an interior point of a face of $\mathcal{S}^n_+$~\cite{barvinok_convexity_2002}. Since by assumption the primal problem has a non-zero solution, the PD constraint must be active and therefore $a > 0$. Now, consider the dual value $\mathcal{D}(\bar{\vb b}/a)$ as the optimum of \eqref{eq:waterfilling_dual}: it is clear that $\bar{\mJ}$ is a feasible point of \eqref{eq:waterfilling_dual}, and furthermore by construction $\grad f(\bar{\mJ})$ satisfies the first order local optimality condition
\begin{equation}
    \grad f(\bar{\mJ}) = a \left(\mA + \sum_{l=1}^m \frac{\bar{b}_l}{a} \mB_l \right) + \sum s_k \vb{S}_k.
\end{equation}
Conclude that $\mathcal{D}(\bar{\vb b}/a) = f(\mJ) = C$ and therefore strong duality holds. Note also that for $\mathcal{D}(\vb b)$ to remain finite, $\mA + \sum b_l \mB_l$ must be positive semidefinite per Appendix \ref{asec:SDP_PD}, so we can always do waterfilling to evaluate finite $\mathcal{D}$ values. Finally, given the analytic form of the waterfilling solution of \eqref{eq:waterfilling_dual} one can also derive dual gradients $\partial \mathcal{D} / \partial \vb b$ using matrix perturbation theory, and thus the dual problem \eqref{eq:min_waterfilling_dual} can be solved using gradient-based optimization.

\section{Generalized Water-filling and Shannon Relaxation Solution} \label{asec:shannon_relax_proof1}
In this section, we provide a general solution of \eqref{eq:shannon_relax_1}.
The problem can be written
\begin{equation} \begin{aligned} \label{eq:simple_problem}
    \underset{\vb x, \vb y}{\max} \qquad &  f(\vb x, \vb y) = \sum_{i=1}^{n} \log_2 \left( 1 + \gamma x_i y_i \right) \\
    \text{s.t.} \qquad & \vb 1^T \vb x = \vb 1^T \vb y = 1, \\ 
    & \vb x, \vb y \geq 0,
\end{aligned} \end{equation}
where the \(c\) and prime in \(n_c, \gamma\) have been dropped, respectively. This can be thought as a ``generalized" water-filling problem where both \(\sigma_i^2\) and \(J_i\) are optimization degrees of freedom.
Let \(x^*\), \(y^*\) be optimal solutions to this problem, which exist because the function is continuous over the compact set defined by the constraints. 
First, we consider asymptotic solutions in \(\gamma\).
\\~\\
\(\gamma \ll 1\):
\(\sum_i \log_2 (1+\gamma x_i y_i) \approx \gamma/\log 2 \sum_i x_i y_i \). Given the constraints \(\vb 1^T x = 1\) and \(\vb 1^T y = 1\) (easy to see the optimum must have equality), and that \(\log(1+x) \leq x \), we get the bound
\begin{equation}
    C \leq \gamma / \log 2
\end{equation}
using the method of Lagrange multipliers, first on the maximization with respect to \(x_i\) and then \(y_i\). This is a bound for all \(\gamma\) but is most accurate when \(\gamma\) is small. 
\\~\\
\(\gamma \gg 1\):
In this case, we find that 
\(\sum_i \log_2 (1+\gamma x_i y_i) \approx \sum_i \log_2 x_i + \log_2 y_i + \log_2 \gamma \).
This maximization problem can also be analytically solved given the same constraints by the method of Lagrange multipliers to find
\begin{equation}
    C \to  n_c \log_2 \dfrac{1}{n_c} + n_c \log_2 \dfrac{1}{n_c} + n_c \log_2{\gamma} = n_c \log_2 \dfrac{\gamma}{n_c^2}.
\end{equation}
This corresponds to equal allocation \(1/n_c\) and \(1/n_c\) for all channels. 
This value approaches a bound on \(C\) as \(\gamma \to \infty\). 
Now, we show how this problem can be solved exactly, analytically.
In the following analysis, we take \(\log_2 \to \log\) for mathematical convenience, understanding that solutions are scaled by a constant factor. 

\begin{lemma}
    Let \(S_{x^*} = \{ i \in [n]: x_i^* > 0\}\) and \(S_{y^*} = \{i\in [n] : y_i^* > 0\}\) be the support sets of \(\vb x^*\) and \(\vb y^*\) respectively. Then \(S_{x^*} = S_{y^*} \equiv S\).
    \label{lemma:eq_support}
\end{lemma}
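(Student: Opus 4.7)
The plan is to establish the lemma by a direct exchange argument, showing that any putative optimal solution with mismatched supports can be strictly improved, contradicting optimality. Since the objective $f(\vb x, \vb y) = \sum_i \log_2(1+\gamma x_i y_i)$ is continuous and the feasible set is compact, an optimum $(\vb x^*, \vb y^*)$ exists, and it suffices to rule out the case $S_{x^*} \neq S_{y^*}$.

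First I would assume, for contradiction and without loss of generality, that there exists some index $i_0$ with $x^*_{i_0} > 0$ but $y^*_{i_0} = 0$. Since $\vb y^*$ satisfies $\vb 1^T \vb y^* = 1$, the set $S_{y^*}$ is nonempty, so I can pick some $j_0 \in S_{y^*}$. The key observation is that the $i_0$-th summand equals $\log_2(1+\gamma x^*_{i_0} \cdot 0) = 0$, so the mass $x^*_{i_0}$ contributes nothing to the objective while still being counted in the budget $\sum_i x_i = 1$.

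Next I would construct a modified feasible vector $\tilde{\vb x}$ obtained by transferring all of $x^*_{i_0}$ onto coordinate $j_0$: set $\tilde{x}_{i_0} = 0$, $\tilde{x}_{j_0} = x^*_{j_0} + x^*_{i_0}$, and $\tilde{x}_k = x^*_k$ otherwise. This preserves the simplex constraint and nonnegativity. Comparing objectives term by term, the $i_0$ term stays at $0$, every other term with $k \notin \{i_0,j_0\}$ is unchanged, and the $j_0$ term increases strictly from $\log_2(1+\gamma x^*_{j_0} y^*_{j_0})$ to $\log_2(1+\gamma(x^*_{j_0}+x^*_{i_0}) y^*_{j_0})$ because $x^*_{i_0} > 0$, $y^*_{j_0} > 0$, and $\log_2(1+\cdot)$ is strictly monotone on $[0,\infty)$. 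Thus $f(\tilde{\vb x}, \vb y^*) > f(\vb x^*, \vb y^*)$, contradicting optimality, so $S_{x^*} \subseteq S_{y^*}$. The reverse inclusion follows by the symmetric argument interchanging the roles of $\vb x$ and $\vb y$.

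There is no serious obstacle here; the argument is a routine reallocation/exchange. The only subtlety to articulate cleanly is that the trace constraints are equalities at the optimum (already used implicitly via $\vb 1^T \vb x^* = \vb 1^T \vb y^* = 1$), which is why mass sitting on an index with a zero partner is genuinely wasted rather than simply free. Once that is spelled out, the strict increase in the $j_0$ term completes the proof.
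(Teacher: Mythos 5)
Your proof is correct and follows essentially the same reallocation/exchange argument as the paper: mass sitting on an index whose partner coordinate is zero contributes nothing to the objective and can be strictly profitably transferred to an index in the other vector's support. The paper phrases the exchange on $\vb y$ while you phrase it on $\vb x$, but the argument is identical.
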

\begin{proof}
    Pick two indices \(i\neq j\). If \(x_i = 0, y_i = c\), then taking \(y_i \to 0\) and \(y_j \to y_j + c\) strictly increases the objective. 
    Intuition: allocating mass to some \(y_i\) if \(x_i=0\) contributes \(0\) to the objective, whereas allocating that same mass to another \(y_j\) contributes positively. Thus the support sets of \(\vb x^*\) and \(\vb y^*\) are equal.  
\end{proof}

\begin{lemma} \label{lem:conditions}
    For any pair of indices \(i, j \in S\),  \(x_i^* y_i^* x_j^* y_j^* = 1/\gamma^2 \) (Condition 1) or \(x_i^* = x_j^*\) (Condition 2).
\end{lemma}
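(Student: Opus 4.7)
The plan is to derive the two alternative conditions from the first-order (KKT) optimality conditions on the support $S$. By Lemma~\ref{lemma:eq_support}, both $x_i^*, y_i^* > 0$ for $i \in S$, so the nonnegativity multipliers are inactive there, and I only need the stationarity conditions associated with the two equality constraints $\sum x_i = 1$ and $\sum y_i = 1$.

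First I would form the Lagrangian $\mathcal{L} = \sum_i \log(1+\gamma x_i y_i) - \lambda(\sum x_i - 1) - \mu(\sum y_i - 1)$ and write down, for each $i \in S$, the two conditions
\begin{equation*}
    \frac{\gamma y_i^*}{1+\gamma x_i^* y_i^*} = \lambda, \qquad \frac{\gamma x_i^*}{1+\gamma x_i^* y_i^*} = \mu.
\end{equation*}
Taking the ratio of these for two indices $i,j \in S$ gives $y_i^* x_j^* = y_j^* x_i^*$, while multiplying them yields
\begin{equation*}
    \frac{u_i}{(1+u_i)^2} = \frac{u_j}{(1+u_j)^2}, \qquad u_k := \gamma x_k^* y_k^*.
\end{equation*}
The key observation is the structure of $g(u) = u/(1+u)^2$ on $[0,\infty)$: it is strictly increasing on $[0,1]$, strictly decreasing on $[1,\infty)$, and solving $g(u)=c$ amounts to the quadratic $cu^2 + (2c-1)u + c = 0$ whose two roots always have product $1$. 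Hence $g(u_i) = g(u_j)$ forces either $u_i = u_j$ or $u_i u_j = 1$.

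In the second case $u_i u_j = 1$ unpacks directly to $\gamma^2 x_i^* y_i^* x_j^* y_j^* = 1$, giving Condition~1. In the first case $u_i = u_j$, combining $x_i^* y_i^* = x_j^* y_j^*$ with $y_i^* x_j^* = y_j^* x_i^*$ gives $(x_i^*)^2 = (x_j^*)^2$, and positivity on $S$ yields $x_i^* = x_j^*$, i.e.\ Condition~2. The only subtle step is the observation about $g(u)$; everything else is routine KKT manipulation. I would close by noting that the dichotomy is genuinely nonexclusive (both conditions can hold simultaneously, e.g.\ when $u_i = u_j = 1$), which is consistent with the ``or'' in the statement.
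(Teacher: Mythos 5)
Your proof is correct, and it reaches the same dichotomy from the same starting point (first-order optimality on the common support), but via a genuinely different algebraic route. The paper fixes $y^*$ and applies the KKT conditions to the resulting one-variable-block problem, obtaining the difference relation $x_j^*-x_i^* = (y_j^*-y_i^*)/(\gamma y_i^* y_j^*)$, then repeats with the roles swapped and combines the two relations so that the factor $(x_j^*-x_i^*)$ either cancels (forcing $\gamma^2 x_i^* y_i^* x_j^* y_j^*=1$) or vanishes (forcing $x_i^*=x_j^*$). You instead work with the joint Lagrangian, take the product of the two stationarity conditions to get $u/(1+u)^2$ constant on the support with $u_k=\gamma x_k^* y_k^*$, and invoke the reciprocal-root property of the quadratic $cu^2+(2c-1)u+c=0$; the case $u_iu_j=1$ is Condition~1 and the case $u_i=u_j$, combined with the ratio identity $y_i^*x_j^*=y_j^*x_i^*$, gives Condition~2. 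Both arguments are sound (the constraints are linear, so KKT is necessary at the maximizer). Your version buys something extra that the paper does not exploit here: the ratio identity says $y_i^*/x_i^*$ is a single constant $k$ on the support, and summing over the support with $\sum_i x_i^*=\sum_i y_i^*=1$ immediately forces $k=1$, i.e., $x_i^*=y_i^*$ for all $i$. This is precisely the content of the paper's subsequent lemma, which is there proved by a longer case analysis; your route would shorten that step considerably.
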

\begin{proof}
Fix \(y^*\) and consider the following optimization problem 
\begin{equation} \begin{aligned}
    \max_{x} \qquad &f^{\gamma,y^*}(\vb x) = \sum_{i =1}^n \log \left( 1 + \gamma x_i y_i^* \right) \\
    \textrm{s.t.} \qquad & \vb 1^T \vb x = 1, \\ & \vb x \ge 0. 
\end{aligned} \label{eq:yaux_problem} \end{equation}
We know that $ x^*$ is a KKT point in the above optimization problem.
The KKT conditions are, for some \(\lambda > 0, \gamma_1, \dots \nu_n \ge 0 \),
\begin{equation} \begin{aligned}
    \lambda \vb 1 - \sum_{i=1}^n \nu_i \vb 1_{x_i^* = 0} &= \nabla f_{\gamma,y^*}(x^*),
\end{aligned} \end{equation}
where \(\vb 1_{x_i^* = 0}\) is an indicator function for \(x_i^* = 0\).
Now $x_i^* = 0$ if and only if $y_i^* = 0$ by Lemma~\ref{lemma:eq_support}. 
If $y_i^* = 0$ then it is easy to see that $\nabla f_i^{\gamma,y^*}(x^*) = 0$, and therefore $\nu_i = \lambda$. Otherwise, \(\vb 1_{x_i^* = 0}\) is \(0\) everywhere. 
\\~\\
Therefore, for each pair of \(i,j\in S\), 
\begin{equation} \label{eq:KKT2}
    \dfrac{y_i^* \gamma}{1+\gamma x^*_i y_i^*} = \dfrac{\gamma y_j^* }{1+\gamma x^*_j y_j^*}.
\end{equation}
Solving for \(x^*_j - x^*_i\), we obtain
\begin{equation}
    x^*_j - x^*_i = \dfrac{(y_j^* - y_i^*)}{\gamma y_j^* y_i^*}.
    \label{eq:yaux_xdiff}
\end{equation}
Making the same argument fixing \(x^*\), we find 
\begin{equation} \begin{aligned} \label{eq:kkt1}
    y_j^* - y_i^* &= \dfrac{(x_j^* - x_i^*)}{\gamma x_j^* x_i^*}. \\ 
\end{aligned} \end{equation}
Combining these, we find that 
\begin{equation}
    x_j^* - x_i^* = \dfrac{1/\gamma^2}{x_i^* y_i^* x_j^* y_j^*} (x_j^* - x_i^*),
\end{equation}
from which it follows that either \(x_i^* y_i^* x_j^* y_j^* = 1/\gamma^2\) or \(x_i^* = x_j^*\).
\end{proof}

\begin{lemma}
    For every $i$, $x_i^* = y_i^*$.
\end{lemma}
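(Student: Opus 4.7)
The plan is to exploit the $\vb x \leftrightarrow \vb y$ symmetry of problem~\eqref{eq:simple_problem} together with the KKT analysis already carried out in the proof of Lemma~\ref{lem:conditions}. First I would dispose of indices $i \notin S$: by Lemma~\ref{lemma:eq_support}, $x_i^* = 0$ if and only if $y_i^* = 0$, so $x_i^* = y_i^*$ is immediate there, and attention reduces to $i \in S$, where both components are strictly positive.

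For $i \in S$, the stationarity condition~\eqref{eq:KKT2} obtained by fixing $\vb y^*$ says that the quantity $y_i^*/(1 + \gamma x_i^* y_i^*)$ is equal to a single constant $\mu_x$ independent of $i$. Because the objective and both constraints of~\eqref{eq:simple_problem} are invariant under swapping $\vb x$ and $\vb y$, the same KKT argument applied to the subproblem in which $\vb x^*$ is fixed and $\vb y$ is optimized yields the companion identity that $x_i^*/(1 + \gamma x_i^* y_i^*)$ is a single constant $\mu_y$ on $S$; this is the content that, in the proof of Lemma~\ref{lem:conditions}, produced the difference identity~\eqref{eq:kkt1}.

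Dividing these two relations eliminates the common denominator $1 + \gamma x_i^* y_i^*$ and gives $y_i^*/x_i^* = \mu_x/\mu_y \equiv c$ for every $i \in S$. Summing over $i \in S$ and invoking primal feasibility $\sum_{i \in S} x_i^* = \sum_{i \in S} y_i^* = 1$ then forces $c = 1$, whence $y_i^* = x_i^*$ on $S$. Combined with the off-support case, this gives the claim for all $i$.

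I expect no significant technical obstacle: the denominators $1 + \gamma x_i^* y_i^*$ are positive since $\gamma, x_i^*, y_i^* \geq 0$, and the symmetric KKT step is immediate from the invariance of~\eqref{eq:simple_problem} under $\vb x \leftrightarrow \vb y$. The only point requiring care is that the two constants appearing on either side are distinct Lagrange multipliers $\mu_x$ and $\mu_y$, so their ratio $c$ must be pinned down by the sum constraints rather than assumed to equal $1$ from the outset.
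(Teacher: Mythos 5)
Your proof is correct, and it is a leaner route than the paper's. The paper first establishes Lemma~\ref{lem:conditions} (for each pair $i,j$ in the support, either $x_i^*y_i^*x_j^*y_j^*=1/\gamma^2$ or $x_i^*=x_j^*$), groups indices into classes of equal $x$-value, applies the $1/\gamma^2$ identity to a representative of each class to show $X_k/Y_k$ is class-independent, and only then invokes the equal sum constraints. You skip all of that: the two stationarity conditions $y_i^*/(1+\gamma x_i^*y_i^*)=\mu_x$ and $x_i^*/(1+\gamma x_i^*y_i^*)=\mu_y$ (the first is exactly \eqref{eq:KKT2}, the second its $\vb x\leftrightarrow\vb y$ mirror, valid since $(\vb x^*,\vb y^*)$ is optimal for each coordinate subproblem) share the denominator $1+\gamma x_i^*y_i^*$, so dividing them index-by-index gives $y_i^*/x_i^*=\mu_x/\mu_y$ on the common support $S$ directly, and summing pins the ratio to $1$. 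The division is legitimate because both components are strictly positive on $S$ (Lemma~\ref{lemma:eq_support}) and hence $\mu_y>0$. What your shortcut does not replace is the role Lemma~\ref{lem:conditions} plays \emph{after} this lemma in the paper's argument, where the dichotomy $X_i^2X_j^2=1/\gamma^2$ or $X_i=X_j$ is used to conclude $K\le 2$; so the paper's extra machinery is not wasted in context, but for the statement at hand your argument is the more economical one.
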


\begin{proof}

Enumerate the unique nonzero values of $x_i^*$ for $i=1\dots n$ by $X_1, \dots X_K$. Suppose $K \ge 2$ (if $K=1$ the result is clear). For $i=1 \dots k$ let $S_k = \{i \in \{1 \dots n\} \text{ such that } x_i=X_k\}$ for $k \in \{1 \dots K\}$.
By \eqref{eq:kkt1} if $x_i^* = x_j^*$ then also $y_i^* = y_j^*$. Denote the unique value of $y_i^*$ in the set $S_k$ by $Y_k$.
\\~\\
Now choose $i \in S_k$  for some $k \ge 2$ and $j \in S_1$. From \eqref{eq:KKT2} combined with the fact that $x_i^* y_i^* x_j^* y_j^* = 1/\gamma^2$, we observe
\begin{align*}
    \gamma y_i^* &= \gamma y_j^* \Big( \frac{1 + 1/(\gamma x_j^* y_j^*)}{1 + \gamma x_j^* y_j^*} \Big)
\end{align*}
and symmetrically in $x$,
\begin{align*}
    \gamma x_i^* &= \gamma x_j^* \Big( \frac{1 + 1/(\gamma x_j^* y_j^*)}{1 + \gamma x_j^* y_j^*} \Big).
\end{align*}
Now divide the above two equations to see that for every $k$,
\begin{align*}
    \frac{X_k}{Y_k} &= \frac{X_1}{Y_1}.
\end{align*}
Since $1=\sum_{i=1}^n x_i^* = \sum_{i=1}^n y_i^*$, this implies that $x_i^* = y_i^*$ for each $i$ such that $x_i^* > 0$.
\end{proof}
Now we know that for each $i\neq j$, $X_i^2 X_j^2 = 1/\gamma^2$. This is impossible if $K \ge 3$ (\(xy=c,yz=c,xz=c \implies y=z\)) so we know $K=1$ or $K=2$. 
First, we consider \(K=1\).
\begin{lemma} \label{lemma:derivatives}
    Let $g(x) = x \log (1 + 1/x^2)$. There is a unique local maximizer of $g$ over the real numbers, which is also the global maximizer, and is between $0$ and $1$. Let this real number be $r$.
\end{lemma}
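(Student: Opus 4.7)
\medskip
\noindent\textbf{Proof proposal.} The plan is to exploit the oddness of $g$ to reduce to $x > 0$, locate a unique critical point via a one-variable substitution, and then read off the global maximizer from boundary behavior.

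First I would observe that $g(-x) = -g(x)$ and that $\log(1+1/x^2) > 0$ for all $x \ne 0$, so $g(x) < 0$ on $(-\infty,0)$ and $g(x) > 0$ on $(0,\infty)$. Elementary limits give $\lim_{x\to 0} g(x) = 0$ (from $x\log(1+1/x^2) \sim -2x\log|x|$) and $\lim_{|x|\to\infty} g(x) = 0$ (from $x\log(1+1/x^2)\sim 1/x$). Consequently any local maximizer of $g$ on $\mathbb{R}$ must lie in $(0,\infty)$, and any local maximizer in $(0,\infty)$ is automatically a global maximizer on $\mathbb{R}$.

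Next I would compute
\begin{equation}
g'(x) = \log\!\left(1+\tfrac{1}{x^2}\right) - \frac{2}{x^2+1}
\end{equation}
and substitute $u = 1/x^2$, reducing the sign of $g'(x)$ for $x>0$ to the sign of $h(u) = \log(1+u) - 2u/(1+u)$ for $u \in (0,\infty)$. A short calculation gives $h'(u) = (u-1)/(1+u)^2$, so $h$ strictly decreases on $(0,1)$ and strictly increases on $(1,\infty)$. Using $h(0)=0$, $h(1)=\log 2 - 1 < 0$, and $h(u)\to\infty$ as $u\to\infty$, strict monotonicity on each branch forces $h$ to have a single zero $u^\star \in (1,\infty)$, with $h<0$ on $(0,u^\star)$ and $h>0$ on $(u^\star,\infty)$.

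Pulling this back, $g$ has a unique critical point on $(0,\infty)$ at $r = 1/\sqrt{u^\star}$, and $u^\star > 1$ forces $r \in (0,1)$. The sign of $g'$ flips from positive to negative at $r$, so $r$ is a strict local maximizer; combined with the opening paragraph this is the unique local and global maximizer of $g$ on $\mathbb{R}$. The only delicate step is confirming that $g$ does not accumulate maximizers at the boundaries $\{0,\pm\infty\}$, but the limit computations above handle this cleanly. I do not anticipate a substantial obstacle, as everything reduces to a one-dimensional monotonicity argument on $h$.
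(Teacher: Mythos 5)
Your proof is correct and reaches the stated conclusion, but by a genuinely different route from the paper. The paper works directly with $g'$ and $g''$ on the positive axis: it notes $g'(x)\to+\infty$ as $x\to 0^+$, that $g'(1)<0$, shows $g'(x)\le 0$ for $x\ge 1$ via $\log(1+t)\le t$, and computes $g''(x)=2(x^2-1)/\bigl(x(x^2+1)^2\bigr)\le 0$ on $(0,1]$, so concavity forces a unique local maximizer in $(0,1)$. You instead substitute $u=1/x^2$, reducing the sign of $g'$ on $(0,\infty)$ to that of $h(u)=\log(1+u)-2u/(1+u)$ with $h'(u)=(u-1)/(1+u)^2$; this yields in one stroke that $g'$ changes sign exactly once on $(0,\infty)$, from positive to negative, so $g$ is increasing then decreasing there, with the critical point $r=1/\sqrt{u^\star}\in(0,1)$ because $u^\star>1$. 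Your route avoids the second-derivative computation and gives the full monotonicity profile rather than a piecewise concavity/decrease argument. One small point to tighten: in your opening paragraph, the negativity of $g$ on $(-\infty,0)$ does not by itself exclude local maximizers there (a negative function can have local maxima); but since $g$ is odd, $g'$ is even, so your sign analysis shows $g'$ switches from negative to positive at $-r$, making $-r$ a local \emph{minimizer} and leaving $r$ as the only local maximizer on all of $\mathbb{R}$ --- the paper, for what it is worth, does not discuss negative $x$ at all. (Incidentally, your expression for $g'$ is the correct one; the paper's displayed value $g'(1)=\log 2 - 2/3$ is a typo for $\log 2 - 1$.)
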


\begin{proof}
    $g'(k\to 0) \to \infty$ and $g'(1) = \log(2) - 2/3 \le 0$. For $x \ge 1,$ $g'(x) = \log(1 + 1/x^2) - 2/(x^2 + 1) \le 1/x^2 - 2/(x^2 + 1) = (1 - x^2) / (x^2(x^2 + 1)) \le 0$. Also,
    \begin{align*}
        g''(x) &= \frac{2x}{x^2 + 1} - \frac{2x}{x^2} + \frac{4x}{(x^2 + 1)^2} = \frac{2x(x^2 - 1)}{x^2(x^2 + 1)^2}
    \end{align*}
    so $g''(x) \le 0$ for $x \in [0,1]$ and therefore $g$ is concave. So there can be only one local maximizer in this interval.
\end{proof}

\begin{lemma} \label{lemma:h}
    Let $h(k) = k \log (1 + \gamma/(k^2))$. The unique maximizer of $h$ over the integers is in the interval $[\lfloor r \sqrt{\gamma} \rfloor, \lceil r \sqrt{\gamma} \rceil ]$. 
\end{lemma}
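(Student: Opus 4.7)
The plan is to reduce $h$ to the function $g$ from Lemma \ref{lemma:derivatives} by a change of variables, then transfer its shape information to localize the integer optimum. First I would substitute $k = x\sqrt{\gamma}$, which gives
\[
h(k) = k\log\!\left(1 + \gamma/k^2\right) = x\sqrt{\gamma}\,\log\!\left(1 + 1/x^2\right) = \sqrt{\gamma}\,g(x).
\]
Hence maximizing $h$ over the positive reals is equivalent, up to the positive multiplicative factor $\sqrt{\gamma}$, to maximizing $g$. By Lemma \ref{lemma:derivatives}, $g$ has a unique global maximum at $x = r \in (0,1)$, so the unique real maximizer of $h$ on $\mathbb{R}_{>0}$ is $k^{\star} = r\sqrt{\gamma}$.

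Next I would transfer the qualitative shape of $g$ through the rescaling. Since $k = x\sqrt{\gamma}$ is an order-preserving linear bijection, the concavity of $g$ on $[0,1]$ and its strict decrease on $[1,\infty)$ translate into strict concavity of $h$ on $[0,\sqrt{\gamma}]$ and strict monotone decrease of $h$ on $[\sqrt{\gamma},\infty)$. Together these yield that $h$ is unimodal on $\mathbb{R}_{>0}$, increasing on $[0,k^{\star}]$ and decreasing on $[k^{\star},\infty)$.

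From unimodality the integer conclusion is essentially immediate. Any integer $n < \lfloor r\sqrt{\gamma} \rfloor$ lies to the left of $k^{\star}$ and satisfies $h(n) < h(\lfloor r\sqrt{\gamma}\rfloor)$ by the strict increase of $h$ on $[0,k^{\star}]$; symmetrically, any integer $n > \lceil r\sqrt{\gamma} \rceil$ satisfies $h(n) < h(\lceil r\sqrt{\gamma}\rceil)$ by the strict decrease of $h$ on $[k^{\star},\infty)$. Therefore the integer maximizer must lie in $\{\lfloor r\sqrt{\gamma}\rfloor, \lceil r\sqrt{\gamma}\rceil\}$, as claimed.

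There is no substantive obstacle here, since Lemma \ref{lemma:derivatives} already does the analytic work of establishing the shape of $g$. The only point deserving attention is potential equality $h(\lfloor k^{\star}\rfloor) = h(\lceil k^{\star}\rceil)$, which can occur for a measure-zero set of $\gamma$; the lemma's two-integer interval is precisely what allows for this possibility, so uniqueness in the stated sense is preserved.
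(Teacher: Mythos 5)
Your proof is correct and follows essentially the same route as the paper: rescale via $k = x\sqrt{\gamma}$ so that $h(x\sqrt{\gamma}) = \sqrt{\gamma}\,g(x)$, invoke Lemma~\ref{lemma:derivatives} for the unique real maximizer at $r\sqrt{\gamma}$, and use the shape of $g$ (concave on $[0,1]$, decreasing beyond) to conclude the integer optimum is the floor or ceiling. Your write-up is in fact slightly more careful than the paper's two-sentence version, which contains a typo in the change of variables ($h(k/\sqrt{N})$ rather than $h(k\sqrt{\gamma})$) and leaves the unimodality argument implicit.
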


\begin{proof}
    $h(k/\sqrt{N}) = \sqrt{\gamma} g(k)$ which is maximized over \(\mathbb{R}\) at \(r\sqrt{\gamma}\). Since the function is concave on the relevant interval, the solution is one of \( \lfloor r \sqrt{\gamma} \rfloor, \lceil r \sqrt{\gamma} \rceil \).
\end{proof}

Since \(h(k)\) is concave on \([0,n]\), if \(\floor{r\sqrt{\gamma}} \geq n\) then the optimal \(k\) is \(n\). Therefore we show that the \(K=1\) solution to~\eqref{eq:simple_problem} is
\begin{gather} 
n_{\mathrm{opt}}\log \left(1 + \dfrac{\gamma}{n_{\mathrm{opt}}^2} \right) \\ \quad n_{\mathrm{opt}} = 
\begin{cases}
     \floor{r\sqrt{\gamma}} \textrm{ or } \ceil{r\sqrt{\gamma}} \quad & \floor{r\sqrt{\gamma}} \leq n \\ 
     n \quad & \floor{r\sqrt{\gamma}} \geq n
\end{cases}
\end{gather}
where \(n_{\mathrm{opt}} \geq 1\), \(r = \max_{x\in [0, 1]} \left[ x \log \left(1+\dfrac{1}{x^2}\right) \right] \approx 0.505 \) and the floor and ceiling functions are a consequence of the number of populated channels being an integer allocation.
\\~\\
Now we consider $K=2$ solutions.
Suppose that $|S_1| = n_1$ and $|S_2| = n_2$ such that $n_1 + n_2 \le n$. 
Observe that $X_1 = \lambda/n_1$ and $X_2 = (1-\lambda)/n_2$ for some $\lambda \in [0,1]$. 
Consider the function,
\begin{equation} 
    f(\lambda) = n_1 \log \Big( 1 + \frac{\lambda^2 \gamma}{n_1^2} \Big) + n_2 \log \Big( 1 + \gamma \frac{(1 - \lambda)^2}{n_2^2} \Big). \label{eq:f}
\end{equation}
Any maximizer in the $K=2$ case must be of this form. 
We can explicitly find the maximizing $\lambda$ for this function in terms of $n_1$ and $n_2$.

\begin{lemma}
    The values of $\lambda$ which can be a maximum of $f$ in $[0,1]$ are $\lambda= 0$, $\lambda = 1$,
    \begin{align*}
        \lambda_0 = \frac{n_1}{n_2 + n_1} \quad & \textrm{ or } \quad \lambda_{\pm} = \frac{1 \pm \sqrt{1 - 4 n_1 n_2 / \gamma}}{2}.
    \end{align*}
\end{lemma}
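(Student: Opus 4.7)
The plan is standard: enumerate the candidate maximizers of $f$ on $[0,1]$ by combining the two endpoints with all interior stationary points. First I would compute
\begin{equation*}
f'(\lambda) = \frac{2\gamma\lambda/n_1}{1 + \gamma\lambda^2/n_1^2} - \frac{2\gamma(1-\lambda)/n_2}{1 + \gamma(1-\lambda)^2/n_2^2}
\end{equation*}
directly from the definition of $f$ in Eq.~\eqref{eq:f} and set $f'(\lambda)=0$.

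The heart of the argument is an algebraic simplification: after clearing denominators and multiplying through by $n_1 n_2$, the stationarity condition can be rearranged into the factored form
\begin{equation*}
\bigl(n_2\lambda - n_1(1-\lambda)\bigr)\left(1 - \frac{\gamma\lambda(1-\lambda)}{n_1 n_2}\right) = 0.
\end{equation*}
One verifies this by grouping the two purely linear terms together and the two $\gamma$-weighted cubic terms together; both groups share the common factor $n_2\lambda - n_1(1-\lambda)$, which then pulls out. The two resulting cases correspond exactly to the two families in the lemma. The first factor vanishing yields $\lambda_0 = n_1/(n_1+n_2)$. The second factor vanishing gives $\gamma\lambda(1-\lambda) = n_1 n_2$, i.e.\ the quadratic $\lambda^2 - \lambda + n_1 n_2/\gamma = 0$, whose roots are precisely $\lambda_\pm = (1 \pm \sqrt{1 - 4n_1 n_2/\gamma})/2$. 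Note that $\lambda_\pm$ are real only when $4 n_1 n_2 \le \gamma$; otherwise the second factor has no real roots and only $\lambda_0$ remains as an interior candidate, which is consistent with the lemma's phrasing of ``values which \emph{can} be a maximum.''

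Finally I would append the boundary values $\lambda=0$ and $\lambda=1$, at which $f$ may also attain its maximum on the closed interval $[0,1]$. Together with $\lambda_0$ and $\lambda_\pm$ this exhausts all possibilities, since the stationarity equation is polynomial of degree at most three and the factorization above accounts for all of its roots. The main obstacle is spotting the factorization cleanly: a brute-force expansion produces a cubic in $\lambda$ and would obscure the structure, whereas the grouping above reveals that the linear factor $n_2\lambda - n_1(1-\lambda)$ encodes the symmetric ``equal ratio'' allocation $\lambda_0$, while the quadratic factor encodes the product constraint $\gamma\lambda(1-\lambda)=n_1n_2$ which is exactly Condition~1 of Lemma~\ref{lem:conditions} restricted to the two-value case.
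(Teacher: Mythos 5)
Your proposal is correct and takes the same route as the paper: the paper's proof is a one-line assertion that these are the interior critical points together with the endpoints, and your computation of $f'$, the factorization $\bigl(n_2\lambda - n_1(1-\lambda)\bigr)\bigl(1 - \gamma\lambda(1-\lambda)/(n_1 n_2)\bigr) = 0$, and the resulting roots $\lambda_0$ and $\lambda_\pm$ simply make that assertion explicit (and the algebra checks out). The added observations — that $\lambda_\pm$ are real only when $4n_1 n_2 \le \gamma$, and that the quadratic factor is Condition~1 of Lemma~\ref{lem:conditions} specialized to $K=2$ — are accurate and consistent with the paper.
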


\begin{proof}
    These are the values of $\lambda$ which have zero derivative, and those at the extremes of $\lambda \in [0, 1]$.
\end{proof}

These are the possible candidates for being a maximum of our function when $K=2$. The first and second cases correspond to $K=1$. In the third case, $X_1 = 1/(n_1 + n_2) = X_2$, so in fact $K = 1$. 
In the fourth case, we can assume without loss of generality that $n_1 \ge n_2$, meaning \(\lambda_+\) is the optimal \(K=2\) point (as it must be between local minima of \(\lambda_0\) and \(\lambda_-\)).
In practice, we numerically compare the \(K=1\) with all \(K=2\) solutions to get the global optima of this problem. 
In theory, we believe it is possible to prove that for all \(K=2\) solutions, there exists a \(K=1\) solution with a higher objective value. 

We now prove this for most values of \(N\). 
Towards a contradiction, we assume that $\lambda_+ \equiv \lambda, n_1, n_2$ in~\eqref{eq:f} represent a global maximum of the function~\eqref{eq:simple_problem}.
We will now show that $\lambda_+$ representing a global maximum leads to a contradiction. That is, we can either change $\lambda, n_1,$ or $n_2$ to a different value to get a better objective value.
\\~\\
\textbf{Case 1}: $1 \ge \max \left\{ (1/\sqrt{\gamma}) n_1 /\lambda, (1/\sqrt{\gamma}) n_2 / (1 - \lambda) \right\} $ \\
By the concavity of \(h\) (on the interval $[0, 1]$), we find that for $\lambda$ such that $1 \ge \dfrac{n_1 / \sqrt{\gamma}}{\lambda}$ and $1 \ge \dfrac{n_2 / \sqrt{\gamma}}{1 - \lambda}$,

\begin{align*}
    f(\lambda) &= \lambda h\Big( \frac{n_1}{\lambda} \Big) + (1 - \lambda) h\Big( \frac{n_2}{1 - \lambda} \Big)\\ &\le h(n_1 + n_2)
\end{align*}
 meaning we can get a better optimal value by summing $n_1$ and $n_2$ and considering a point in the $K=1$ case, giving a contradiction.
\\~\\
\textbf{Case 2:} $1 \ge 1/\sqrt{\gamma} n_1 /\lambda$, but $1 \le (1/\sqrt{\gamma}) n_2 / (1 - \lambda)$, and $n_2 \ge 2$. First, we prove the following lemma:
\begin{lemma} \label{lemma:decreasing}
    For $x \ge 1, g(x/2) \ge g(3x / 4) \ge g(x)$ and $g$ is strictly decreasing on $[3x / 4, x]$.
\end{lemma}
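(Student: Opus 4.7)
The argument relies entirely on the monotonicity structure of $g$ already established in Lemma~\ref{lemma:derivatives}: the unique maximizer $r$ satisfies $r \approx 0.505 \in (0, 3/4)$, with $g'(x) > 0$ on $(0, r)$ and $g'(x) < 0$ on $(r, \infty)$ (the latter because $g'(1) < 0$, $g''\le 0$ on $[0,1]$, and $g''\ge 0$ on $[1,\infty)$ with $g'(x) \to 0$ from below at infinity). I will use these monotonicity regions to dispatch all three claims.

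\textbf{Strict decrease on $[3x/4, x]$ and $g(3x/4) \ge g(x)$.} Since $x \ge 1$ forces $3x/4 \ge 3/4 > r$, the entire interval $[3x/4, x]$ lies in the region where $g' < 0$. This simultaneously gives strict monotone decrease on this interval and the pointwise inequality $g(3x/4) > g(x)$.

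\textbf{The inequality $g(x/2) \ge g(3x/4)$.} I split on whether $x/2$ has crossed $r$. When $x \ge 2r$, both $x/2$ and $3x/4$ exceed $r$, so monotonicity on $(r,\infty)$ together with $x/2 \le 3x/4$ yields the inequality immediately. When $1 \le x < 2r$ (a very narrow window since $2r \approx 1.01$), $x/2 \in [1/2, r)$ lies in the increasing region of $g$ while $3x/4 \in (r, 3r/2)$ lies in the decreasing region. Hence $\phi(x) := g(x/2) - g(3x/4)$ is a sum of a function increasing in $x$ and a function decreasing in $-x$, so $\phi$ is itself increasing on $[1, 2r]$. It thus suffices to verify the single boundary inequality $\phi(1) \ge 0$, i.e.\ $g(1/2) \ge g(3/4)$.

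\textbf{Main obstacle.} The only non-trivial point is the boundary check $g(1/2) \ge g(3/4)$; all the rest is bookkeeping on monotonicity from Lemma~\ref{lemma:derivatives}. This reduces to $(1/2)\log 5 \ge (3/4)\log(25/9)$, or equivalently $25^2 \ge (25/9)^3$, i.e.\ $18225 \ge 15625$, which holds by direct arithmetic. Assembling the three pieces completes the proof.
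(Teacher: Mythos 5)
Your proof is correct and follows essentially the same route as the paper's: both reduce everything to the monotonicity regions of $g$ from Lemma~\ref{lemma:derivatives} and then to the single endpoint comparison $g(1/2) \ge g(3/4)$ (your exact integer check $18225 \ge 15625$ is in fact cleaner than the paper's decimal estimates, which are slightly off). One cosmetic slip: exponentiating $2\log 5 \ge 3\log(25/9)$ gives $5^2 \ge (25/9)^3$, not $25^2 \ge (25/9)^3$, though your final arithmetic is consistent with the correct version.
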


\begin{proof}
    We already showed that for $x \ge r$ $g$ is decreasing, and $r \le 3/4$ so the second inequality is trivial. 

    For the first inequality, for $x \ge 2 r$ again it is clearly trivial. We just need to show that it is true for $x \in [1, 2 r]$.
    Observe that $g(1/2) = 1/2 \log(5) \ge 0.7$ and for $x \le r$, $g$ is decreasing. Therefore for any $x \in [1/2, r]$, $g(x) \ge 0.7$. Also for any $x \ge 3/4$, again using the decreasing property of $g$, $g(x) \le g(3/4) \le 0.6$.
\end{proof}
Now, if $\dfrac{n_2/\sqrt{\gamma}}{1 - \lambda} \ge 1$ then $\dfrac{(n_2 - 1)/\sqrt{\gamma}}{1 - \lambda} \ge 1/2$ so, using Lemma \ref{lemma:decreasing}, $\lambda h(n_1 / \lambda) + (1 - \lambda) h (n_2 / (1 - \lambda)) \le \lambda h(n_1 / \lambda) + (1 - \lambda) h ((n_2 - 1) / (1 - \lambda))$.
Therefore, \(n_2 \to n_2 - 1\) increases the objective value, violating the assumption of optimality. 
\\~\\
\textbf{Case 3:} $ 1 \ge  n_1 / (\sqrt{\gamma}\lambda)$, but $1 \le (n_2 / \sqrt{\gamma}) / (1 - \lambda)$, and $n_2 = 1$. 
\\~\\
Note that in this case we must have $n_1 = \lfloor r \sqrt{\gamma} \rfloor / \sqrt{\gamma} $ because otherwise $n_1 \gets n_1 + 1$ would yield an objective improvement. 
\\~\\
First we must observe that $n_2 / (\sqrt{\gamma} \lambda) \ge 2$. This can be done by numerically solving the following bounded three-dimensional program and observing that the optimal objective value is less than $0$,
\begin{align*}
    \underset{x \le 1, y \in [1, 2], \lambda \ge 0.5}{\text{maximize}} \lambda g(x) + (1 - \lambda) g(y) - g(\lambda x + (1 - \lambda) y ). 
\end{align*}
Now using this fact, plus the fact that $g$ is concave on $[0, 1]$,
\begin{align*}
    & \quad \lambda g \Big( \frac{n_1/\sqrt{\gamma}}{\lambda} \Big) + (1 - \lambda) g \Big( \frac{1/\sqrt{\gamma}}{1 - \lambda} \Big)  \\
    &\le \lambda g \Big( \frac{n_1}{\lambda \sqrt{\gamma}} \Big) + (1 - \lambda) g (2) \\
    &= \lambda  \Bigg[ g(n_1 /\sqrt{\gamma}) + g \Big( \frac{n_1}{\lambda \sqrt{\gamma}} \Big) - g(\dfrac{n_1}{\sqrt{\gamma}}) \Bigg] + (1 - \lambda) g(2) \\
    & \le \lambda  \Bigg[ g(n_1 /\sqrt{\gamma}) + g'(n_1 /\sqrt{\gamma}) \Big( \frac{n_1}{\lambda \sqrt{\gamma}} - \dfrac{n_1}{\sqrt{\gamma}}\Big) \Bigg] + (1 - \lambda) g(2) \\
    & = g(\dfrac{n_1}{\sqrt{\gamma}}) \lambda + \dfrac{(1 - \lambda) n_1}{\sqrt{\gamma}} g'(n_1 /\sqrt{\gamma}) + (1 - \lambda) g(2).
\end{align*}
So an optimal point of the $K=2$ type is only possible if
\begin{align*}
    & \underset{m}{\max} g(m/\sqrt{\gamma}) \\ & \le g(n_1 /\sqrt{\gamma}) \gamma + (1 - \gamma) (n_1 /\sqrt{\gamma}) g'(n_1 /\sqrt{\gamma}) + (1 - \gamma) g(2)
\end{align*}
or,
\begin{align*}
    g'(n_1 /\sqrt{\gamma}) & \ge \Big(\frac{\underset{m}{\max}g(m /\sqrt{\gamma}) - \lambda g(n_1 /\sqrt{\gamma})}{1 - \lambda}- g(2)\Big) \frac{\sqrt{\gamma}}{n_1} \\
    & \ge (g(n_1 /\sqrt{\gamma})- g(2)) \frac{\sqrt{\gamma}}{n_1}.
\end{align*}
We can now look at $x$ for which,
\begin{align*}
    g'(x) \ge (g(x) - g(2)) \frac{1}{x}.
\end{align*}
These are $x \le 0.235$ and $x \ge 4.245$ (solved via a numerical solver). This means that to get an optimal point of this form we need $\lfloor r / \sqrt{N} \rfloor \sqrt{N} = n_1 \sqrt{N} \le 0.235$ which, since $r > 0.5$, is impossible.\\~\\

In conclusion, if both $\dfrac{n_1}{\sqrt{\gamma}\lambda}$ and $\dfrac{n_2 / \sqrt{\gamma}}{1-\lambda} \le 1$, by case 1 there exists a value with \(K=1\) with a better objective value. If either $\dfrac{n_1}{\lambda \sqrt{\gamma}}$ or $\dfrac{ n_2 / \sqrt{\gamma}}{1 - \lambda}$ are greater than 1 (WLOG the second case), with $n_2 \ge 2$, by case 2 this allocation is not optimal. 
Finally, if both $\frac{ n_1}{\sqrt{\gamma} \lambda}$ and $\frac{ n_2 / \sqrt{\gamma}}{1 - \lambda}$ are greater than \(1\), then numerically optimal \(K=2\) points are not possible. 
\(K=2\) points are nevertheless numerically checked to not be optimal in the examples presented in the main text. 

\section{Energy-Conserving QCQP Constraints} \label{asec:QCQP}

It is possible to derive the energy conservation constraints presented in the main text via operator constraints (which, like Maxwell's equations are dependent on the spatial profile of permittivity) relaxed to structure-agnostic scalar constraints~\cite{chao_physical_2022}. 
Although these constraints encode less information than Maxwell's equations in full generality, they allow the relaxed problem to be formulated as a QCQP from which techniques in the main text (i.e., Lagrange duality) can be employed to calculate structure-agnostic bounds on photonic performance. 
In this section, we show how these constraints can be directly derived from energy conservation as expressed by Poynting's theorem for time-harmonic complex fields~\cite{tsang_scattering_2004}:
\begin{multline}
    \int_{\partial V} \dd\vb{\sigma} \cdot (\vb{E}\cross\vb{H}^*) = i\omega \int_V (\vb{H}^*\cdot\mu\cdot\vb{H} -  \vb{E}\cdot\epsilon^*\cdot\vb{E}^*) \,\dd V \\
    - \int_V \vb{E}\cdot\vb{J}^* \,\dd V. 
    \label{eq:Poynting_cplx}
\end{multline} 
For simplicity, we will assume a non-magnetic material $\mu=\mu_0=1$ and scalar isotropic permittivity and susceptibility $\epsilon = 1 + \chi$, though the derivation is valid for anisotropic $\epsilon$ as well~\cite{chao_maximum_2022}. 

Consider a scattering theory picture where a free current source $\vb{J}_v$ generates the fields $\vb{E}_v$, $\vb{H}_v$ in vacuum and $\vb{E}_t$, $\vb{H}_t$ in the presence of a structure with material distribution $\epsilon(\vb{r}) = 1 + \chi \I_s(\vb{r})$ where $\I_s(\vb{r})$ is an indicator function. There is an induced polarization current $\vb{J}_s$ in the material which produces scattered fields $\vb{E}_s$ and $\vb{H}_s$ that combine with the vacuum fields to give the total field: $\vb{E}_t = \vb{E}_v+\vb{E}_s$, $\vb{H}_t = \vb{H}_v + \vb{H}_s$. The complex Poynting theorem thus \eqref{eq:Poynting_cplx} applies to three sets of currents, fields, and environments: $(\vb{J}_v, \vb{E}_v, \vb{H}_v)$ in vacuum, $(\vb{J}_s, \vb{E}_s, \vb{H}_s)$ in vacuum, and $(\vb{J}_v, \vb{E}_t, \vb{H}_t)$ over the structure, giving
\begin{multline}
    \int_{\partial V_k} \,\dd \vb{\sigma}\cdot(\vb{E}_v\cross\vb{H}^*_v) = i\omega \int_{V_k} \vb{H}_v^* \cdot \vb{H}_v \,\dd V \\- i\omega  \int_{V_k} \vb{E}_v \cdot \vb{E}_v^* - \int_{V_k} \vb{E}_v \cdot \vb{J}_v^* \,\dd V .
    \label{eq:Poynting_inc}
\end{multline}

\begin{multline}
    \int_{\partial V_k} \,\dd \vb{\sigma}\cdot(\vb{E}_s\cross\vb{H}^*_s) = i\omega \int_{V_k} \vb{H}_s^* \cdot \vb{H}_s \,\dd V \\- i\omega \int_{V_k} \vb{E}_s \cdot \vb{E}_s^* \,\dd V - \int_{V_k} \vb{E}_s \cdot \vb{J}_s^* \,\dd V .
    \label{eq:Poynting_sca}
\end{multline}

\begin{multline}
    \int_{\partial V_k} \,\dd \vb{\sigma}\cdot(\vb{E}_t \cross \vb{H}^*_t) 
    = i\omega \int_{V_k} \vb{H}_t^* \cdot \vb{H}_t \,\dd V \\- i\omega \int_{V_k} (1 + \chi^*\I_s) \vb{E}_t \cdot \vb{E}_t^* \,\dd V - \int_{V_k} \vb{E}_t \cdot \vb{J}_v^* \,\dd V .
    \label{eq:Poynting_tot}
\end{multline}
Subtracting \eqref{eq:Poynting_inc} and \eqref{eq:Poynting_sca} from \eqref{eq:Poynting_tot} gives
\begin{multline}
    \Big\{ i \omega \int_{V_k} \vb{H}_v^* \cdot \vb{H}_s \,\dd V - \int_{\partial V_k} \,\dd \vb{\sigma}\cdot(\vb{E}_s \cross \vb{H}_v^*) \\ - i\omega \int_{V_k} \vb{E}_s \cdot \vb{E}_v^* \,\dd V  \Big\} \\
    +\Big\{i \omega \int_{V_k} \vb{H}_s^* \cdot \vb{H}_v \,\dd V - \int_{\partial V_k} \,\dd \vb{\sigma}\cdot(\vb{E}_v \cross \vb{H}_s^*) \\ - i\omega \int_{V_k} \vb{E}_v \cdot \vb{E}_s^* \,\dd V  \Big\} \\
    = \int_{V_k} \vb{E}_s \cdot \vb{J}_v^* \,\dd V - \int_{V_k} \vb{E}_s \cdot \vb{J}_s^* \,\dd V + i\omega \int_V \chi^* \vb{E}_t \cdot \vb{E}_t^* \,\dd V \numthis.
    \label{eq:Poynting_intermediate}
\end{multline}
Now, using vector calculus identities along with the Maxwell wave equations $\curl\curl\vb{E}_v - \omega^2 \vb{E}_v = i\omega \vb{J}_v$ and $\curl\curl\vb{E}_s - \omega^2 \vb{E}_s = i\omega \vb{J}_s$, the two curly brackets in \eqref{eq:Poynting_intermediate} can be shown to be equal to $\int_{V_k} \vb{E}_s \cdot \vb{J}_v^* \,\dd V$ and $\int_{V_k} \vb{E}_v \cdot \vb{J}_s^* \,\dd V$ respectively. Finally, the induced current $\vb{J}_s$ can be swapped out by the polarization $\vb{p}$ via $\vb{J}_s = -i\omega \vb{p}$, and the scattered field $\vb{E}_s = \bmm{G} \vb{p}$, to give
\begin{equation}
    \int_{V_k} \vb{E}_v^* \cdot \vb{p} \,\dd V = \int_V \chi^{-1*} \vb{p}^*\cdot \vb{p} \,\dd V - \int_{V_k} \vb{p}^* \cdot ( \bmm{G}^\dagger \vb{p}) \,\dd V. 
\end{equation}
This can be written in a compact operator notation
\begin{equation}
    \vb{E}_v^\dagger \I_{V_k} \vb{p} = \vb{p}^\dagger (\chi^{-\dagger} - \bmm{G}^\dagger) \I_{v_k} \vb{p},
    \label{eq:QCQP_constraint}
\end{equation}
giving a form of the energy conservation constraints in the main text, where \(\vb E_v \to \vb S\). 
From this derivation it is clear that the constraint \eqref{eq:QCQP_constraint} encodes conservation of power during the electromagnetic scattering process for every region $V_k$. 
In the case of many sources (as in this paper), each source defines a different scattering problem, with individual source-polarization pairs $(\vb{S}_j, \vb{p}_j)$ satisfying constraints of the form \eqref{eq:QCQP_constraint}. There are also additional ``cross-constraints'' that capture the fact that the same structured media generates the $\vb{p}_j$ induced in each case:
\begin{equation}
    \vb S_j^\dagger \vb p_k - \vb p_j^\dagger \left(\chi^{-\dagger} - \bmm{G}^\dagger \right) \vb p_k = 0, \quad \forall j,k.
    \label{eq:cross_constraints}
\end{equation}
For a more detailed discussion of these constraints, we refer the reader to~\cite{molesky_mathbbt-operator_2021}.
For computational simplicity, only \(j=k\) constraints are enforced in the bounds calculated in this paper, although constraints are enforced for all computational voxels \(V_k\); future incorporation of these constraints is expected to tighten bounds.

\section{Physical Bounds on Green's Function Channel Capacities} \label{asec:channel_bounds}

We will now show how the Lagrange dual relaxation as described in Refs.~\cite{chao_physical_2022, molesky_mathbbt-operator_2021} can be utilized to obtain limits on the Frobenius norm and largest singular value of the total Green's function under arbitrary structuring. 
Ultimately, we seek to place sum-rule bounds on the squares of the singular values of \(\G_{t,RS}\). 
These limits are combined with solutions to Eq.~\eqref{eq:shannon_relax_1} to find limits on the Shannon capacity. 

Unlike the prior algebraic relaxations exploiting passivity (e.g.,~\cite{venkataram_fundamental_2020-2}), these incorporate significantly richer physics through the enforcement of local energy conservation constraints and thus yield tighter limits. 
Note that the trace of \(\G_{t,RS}^\dagger \G_{t,RS}\) can be evaluated by \(\sum_j \vb{v}_j^\dagger \G_{t,RS}^\dagger\G_{t,RS} \vb{v}_j\) in the basis of the singular vectors of \( \G_{0,RS} = \sum_j s_j \vb{u}_j \vb{v}_j^\dagger\).
Defining the vacuum ``source" fields in the receiver region \(\vb S_j \equiv \G_0 \vb v_j = s_j \vb u_j\) with amplitude \(\abs{s_j}^2\) and the polarization field in the photonic structure \(\vb p_j \equiv \left( \chi(\vb r)^{-1} - \G_0 \right)^{-1} \vb S_j \equiv \T \vb S_j\)~\cite{molesky_mathbbt-operator_2021}, we find 
\begin{multline} \label{eq:obj} 
    \vb{v}_j^\dagger \G_{t,RS}^\dagger\G_{t,RS} \vb{v}_j = \vb{v}_j^\dagger \G_{0,RS}^\dagger\G_{0,RS} \vb{v}_j + 
     \vb{p}_j^\dagger \G_{0,RD}^\dagger \G_{0,RD} \vb{p}_j \\ + 
     2\Re{\vb{v}_j^\dagger \G_{0,RS}^\dagger \G_{0,RD} \vb{p}_j } 
\end{multline}
where \(\G_{0,RD}\) acts on polarization fields in the \textit{design region} D (which may contain any region) and gives fields in the receiver region. 
The first term describes the square of the \(j\)-th singular value of the vacuum Green's function \(\abs{s_j}^2 \), while the rest describe the contribution from the photonic structure.
In this picture, \(\vb p_j, \vb S_j\) are vectors in the design region, while \(\vb v_j\) are vectors in the source region.
Generalized energy conservation scalar identities~\cite{chao_physical_2022}, can be written 
\(\vb S_j^\dagger \I_v \vb p_k - \vb p_j^\dagger \left(\chi^{-\dagger} - \G_{0,DD}^\dagger \right) \I_v \vb p_k = 0\), \(\forall j,k\) (see Appenidx~\ref{asec:QCQP}), where \(\I_v\) is projection into any spatial subdomain. 
For the purposes of clarity, we take \(\I_v \to \I\), noting that constraints will be enforced in every region for the numerical calculation of bounds.
From these components, quadratically constrained quadratic programs (QCQPs) can be written to maximize different objectives related to the singular values of \(\G_{t,RS}\), from which bounds on their optimal values can be computed with duality relaxations as described below. 
\\~\\
\textbf{Frobenius norm \(\lVert\mathbb{G}_{t,RS}\rVert_{\mathtt{F}}^{2} = \Tr(\G_{t,RS}^\dagger\G_{t,RS}) = \sum_i \abs{\sigma_i}^2 \leq M_\mathtt{F}\):}
\begin{equation} \label{eq:frobenius_bound} \begin{aligned} 
    \max_{\vb p_j} \quad & \sum_j  \vb{v}_j^\dagger \G_{t,RS}^\dagger\G_{t,RS} \vb{v}_j  \\
     \textrm{s.t.} \quad & \vb S_j^\dagger \vb p_k - \vb p_j^\dagger \left(\chi^{-\dagger} - \G_{0,DD}^\dagger \right) \vb p_k = 0, \\
    & \forall j,k \\
\end{aligned} \end{equation}
\textbf{$L_2$ norm \(\lVert\mathbb{G}_{t,RS}\rVert_{2}^{2} = \sigma_\mathrm{max} \leq M_1\):}
\begin{equation}
    \begin{aligned}
        \max_{\vb{v},\vb{p}} \quad & \vb{v}^\dagger \G_{t,RS}^\dagger\G_{t,RS} \vb{v}  \\
        \textrm{s.t.} \quad & \vb{v}^\dagger \G_{0,DD}^\dagger \vb{p} - \vb{p}^\dagger (\chi^{-1\dagger} - \G_{0,DD}^\dagger) \vb{p} = 0, \quad \\ 
        &\vb{v}^\dagger \vb{v} = 1.
    \end{aligned}
\end{equation}

The first problem is simply maximizing the Frobenius norm of \(\G_{t,RS}\) subject to energy conservation constraints.  In the second problem, we maximize the largest singular value by co-optimizing the source currents \(\vb v\) and the polarization currents \(\vb p\) and enforcing that the singular vector is normalized. 
We note that in all calculations, ``cross"-constraints between sources \(j\neq k\) are not enforced for computational efficiency. 
This relaxes the problem: full incorporation of these constraints is expected to tighten limits. 

To compute limits on these problems, we note that the Lagrangian $\mathcal{L}$ of a QCQP with primal degrees of freedom \(\bm \psi\) is given by
\begin{equation}\begin{aligned}
    \Lag(\bm \psi, \lambda) =  -\bm \psi^\dagger \mathbb{A}(\lambda) \bm \psi &+ 2 \Re \left( \bm \psi^\dagger \mathbb{B}(\lambda) \vb S \right) \\ &+ \vb S^\dagger \mathbb{C}(\lambda) \vb S
\end{aligned}\end{equation}
where \(\mathbb{A}, \mathbb{B}\), and \(\mathbb{C}\) represent the quadratic, linear, and constant components of the Lagrangian and contain both objective and constraint terms.  
The Lagrange dual function $\mathcal{G}$ and its derivatives can be written, in terms of \(\bm \psi_{\mathrm{opt}} \equiv \mathbb{A}^{-1} \mathbb{B} \vb S\) and for positive-definite \(\mathbb{A}\),

\begin{align}
    \mathcal{G}(\lambda) &= \vb S^\dagger \left( \mathbb{B}^\dagger \mathbb{A}^{-1} \mathbb{B} + \mathbb{C} \right) \vb S, \\
    \dfrac{\partial\mathcal{G}}{\partial \lambda} &= 2 \Re \left( \bm \psi_{\mathrm{opt}}^{\dagger} \dfrac{\partial \mathbb{B}}{\partial \lambda} \vb S \right) - \bm \psi_{\mathrm{opt}}^{\dagger} \dfrac{\partial \mathbb{A}}{\partial \lambda} \bm \psi_{\mathrm{opt}} \\ &  \quad+ \vb S^\dagger \dfrac{\partial \mathbb{C}}{\partial \lambda} \vb S.\nonumber
    \label{eq:dual_ZTT}
\end{align}
The positive-definiteness of \(\mathbb{A}\) ensures that the dual problem has feasible points with finite objective values. 
We must therefore show that for each primal problem, there exist dual feasible Lagrange multipliers \(\lambda\) such that \(\mathbb{A}\) is positive-definite. 
In the case of \textbf{(1)}, we simply note that \(\Asym(\G_{0,DD})\) is positive-semidefinite~\cite{tsang_scattering_2004}. 
Therefore, for a lossy design material \(\Im \chi > 0\), we can increase the Lagrange multiplier for the imaginary part of this constraint to ensure \(\mathbb{A}\) is positive-definite at some \(\lambda\). All other multipliers can be initialized to zero. 
In the case of \textbf{(2)}, the same strategy can be employed to make \(\mathbb{A}\) positive-definite in the \(\vb p, \vb p\) sub-block. 
To ensure positive-definiteness in the \(\vb v, \vb v\) sub-block, the semidefinite quadratic constraints \(\vb v_i^\dagger \vb v_i = 1\) can be employed. 
Overall, the existence of these points proves the existence of a bound on their respective primal problems.

\bibliography{refs}

\end{document}